\newcommand{\norm}[1]{\left\lVert#1\right\rVert}
\newtheorem{lem}{\bf{Lemma}}
\newtheorem{Def}{\bf{Definition}}
\newtheorem{prop}{\bf{Proposition}}
\newtheorem{thm}{\bf{Theorem}}
\begin{document}
%
\title{Secure Estimation based Kalman Filter for Cyber-Physical Systems against Adversarial Attacks}


%

\author{Young~Hwan~Chang$^*$, Qie~Hu$^*$, ~Claire~J.~Tomlin
\thanks{Y. H. Chang is with the Department of Biomedical Engineering, Oregon Health and Science University, Portland, OR 97201 USA (e-mail:$chanyo@ohsu.edu$).}
\thanks{Q. Hu, C.J Tomlin are with the Department of Electrical Engineering and Computer Sciences, University of California, Berkeley, CA 94720 USA (e-mail:$\{qiehu, tomlin\}@eecs.berkeley.edu$).}
\thanks{*These authors contributed equally}}
%


\maketitle

\begin{abstract}
Cyber-physical systems are found in many applications such as power networks, manufacturing processes, and air and ground transportation systems. Maintaining security of these systems under cyber attacks is an important and challenging task, since these attacks can be erratic and thus difficult to model. Secure estimation problems study how to estimate the true system states when  measurements are corrupted and/or control inputs are compromised by attackers. The authors in \cite{Fawzi2014} proposed a secure estimation method when the set of attacked nodes (sensors, controllers) is fixed. In this paper, we extend these results to scenarios in which the set of attacked nodes can change over time. We formulate this secure estimation problem into the classical error correction problem \cite{Candes_Tao} and we show that accurate decoding can be guaranteed under a certain condition.
Furthermore, we propose a combined secure estimation method with our proposed secure estimator and the Kalman Filter for improved practical performance. Finally,  we demonstrate the performance of our method through simulations of two scenarios where an unmanned aerial vehicle is under adversarial attack.

\end{abstract}


\begin{IEEEkeywords}
Cyber-physical systems, Error correction, Secure estimation
\end{IEEEkeywords}

%
\IEEEpeerreviewmaketitle

\section{Introduction}

Cyber-physical systems (CPS) 
consist of physical components such as actuators, sensors and controllers that communicate with each other over a network. 
For example, unmanned aerial vehicles (UAV) may obtain position measurements from a Global Positioning System (GPS) or communicate with remote control centers.
Although communication networks are often protected by security measures, cyber attacks can still take place when a malicious attacker obtains unauthorized access, launching jamming attacks \cite{Gligor} or spoofing sensor readings and sending erroneous control signals to actuators \cite{Mo}. For CPS, cyber attacks not only compromise information but can also cause damage in the physical process. This presents new challenges and thus demands new strategies and algorithms \cite{Sastry}. 

There has been extensive work on the security of CPS. 
Each of them relies on specific assumptions about attackers' strategies and it is rarely the case, if not impossible, that one estimator/detector can protect against all possible attacks. 
\cite{Tong, KwonACC, Reiter, Sastry2} studied optimal attack strategies for different control systems and applications. 
From the controller's point of view, researchers have studied how to detect attacks \cite{Blanke, Willsky} and how to accurately estimate the states and control the system when it is under attack. One approach for the latter, which is adopted in robust control and filtering methods, is to model the attack signal as process or measurement noise, and assume that they are bounded \cite{Zhou_Doyle} or follow a certain probabilistic distribution \cite{Bullo, Liu}.
An alternative approach uses game theory, where the controller and attacker are players with competing goals in a game \cite{Wu, Basar, Basar2, Walrand, Pappas}. 
Finally, the authors of \cite{KwonCDC} proposed a hybrid controller, where each constituent controller protects against a specific type of attack.

Recently, \cite{Fawzi2014} studied secure estimation of a linear time invariant system where attack signals can be arbitrary and unbounded, thus protecting the system against more general cyber-attacks. Later, \cite{Pajic2014} and \cite{shoukry2016smt} extended this work by relaxing the assumption of having an exact system model and proposing an SMT-based observer that handles large systems with thousands of sensors, respectively. 
One limiting assumption of \cite{Fawzi2014,Pajic2014,shoukry2016smt} is that the set of attacked nodes/sensors is fixed and can not change over time. 
If a malicious attacker is aware of this, then he or she can exploit this weakness and attack different sensors at different time steps so that such a decoder would fail.

In this paper, we are interested in the case in which the set of attacked nodes can change over time. 
By doing so, our proposed decoder can protect the system against more general attack scenarios than that presented in \cite{Fawzi2014}. 
We believe this is a significant contribution, because it is difficult, if not impossible, to anticipate cyber attackers' strategies and behavior, thus a decoder that is able to handle more general attacks is an improvement. 
Furthermore, security studies on the current traffic infrastructure \cite{ghena2014traffic} demonstrated that once a cyber attacker gains access to the traffic network at a single point, the attacker can send commands to any traffic intersection in the network. In other words, the attacker can freely attack a different set of traffic signals (sensors) at any time. Indeed, an attacker who desires to travel through a set of roads as fast as possible would attack different traffic lights to always give him/herself green lights as he/she moves through the road network. 


\subsection{Contributions}

There are four main contributions in this paper:
\begin{enumerate}[listparindent=1.5em]
\item
We propose a secure decoder for a linear time invariant system under sensor attack, where the attacked sensors can change with time, and attack signals can be unbounded and arbitrary.
The proposed decoder is based on $l_1$ optimization and is computationally efficient.
\item
We prove the maximum number of sensor attacks that can be corrected with our decoder, which turns out to be the same as that of the decoder proposed in \cite{Fawzi2014}. This is a very nice result as it shows that, compared to \cite{Fawzi2014}, our decoder can protect the system against more general attacks, and at the same time, it does not compromise the number of attacks that can be corrected.
\item
We propose a practical method for decoder design that guarantees accurate decoding. 
First, we formulate the secure estimation problem into the classical error correction (EC) problem \cite{Candes_Tao}. In EC, accurate decoding can be guaranteed if the coding matrix satisfies the Restricted Isometric Properties (RIP), which unfortunately are very difficult to check, in addition, we cannot choose a random coding matrix \textit{a priori} in our problem setting. 
Instead of using RIP, in Theorem 1, we provide a sufficient condition for perfect recovery of the system states against sensor attacks.

%

\item
Finally, we propose to combine our decoder with a Kalman Filter (KF) to improve its practical performance. 
The KF filters out both occasional estimation errors by the secure decoder and noisy measurements.
We demonstrate the effectiveness of our combined estimator using two examples of UAVs under adversarial attack.
\end{enumerate}

\subsection{Organization of  the Paper}
This paper is organized as follows. Section \ref{sec:overview} gives an overview of secure estimation for CPS when attacked nodes are fixed, as well as compressive sensing and error correction. 
Section \ref{sec:main} formulates the problem of secure estimation when attacked nodes can change over time and compares it with the case when attacked nodes are fixed. 
In Section \ref{sec:design}, we describe our decoder design method 
and assess the decoder's practical performance through extensive simulations. In addition, we describe how to combine it with a KF to improve its performance in practice. Finally we present two more realistic numerical examples of UAVs subject to adversarial attack in Section \ref{sec:examples}. 
In this paper, we focus on estimation of states under sensor attack, hence the terms `estimator' and `decoder', `sensor' and `node', `attack vector' and `error vector' are used interchangeably.
 
\subsection{Notation}
\begin{itemize}
\item 
$\lvert \textsf{supp} (x) \rvert $  denotes the support of vector $x$, i.e., the number of nonzero components in $x$\footnote{If $f$ is any real-valued or vector-valued function on a topological space $X$, the support of $f$, denoted by $\textsf{supp}(f)$, is the closure of the set points where $f$ is nonzero: $\textsf{supp}(f)  = \{ x \in X \,|\, f(x) \neq 0 \}$.}. 
\item 
$\norm{ x}_{l_1}:= \sum_{i=1} ^n \lvert x_i \rvert $ where $x \in \mathbb{R}^n$. Note that this is not the same as $\lvert \textsf{rowsupp} ( \cdot )  \rvert $ defined in \cite{Fawzi2014}. 
\item  
For a matrix $M \in \mathbb{R}^{m \times n}$, $\mathcal{N} (M) = \{ x \in \mathbb{R}^n \lvert M x = 0 \}$ represents the null space of $M$. $\mathcal{R}(M)$ denotes the range space of $M$, and is defined as the set of all possible linear combinations of its column vectors.
\end{itemize}


%


\section{Overview}\label{sec:overview}
\subsection{Secure Estimation for Fixed Attacked Nodes \cite{Fawzi2014} }
Consider a linear dynamical system in the presence of attacks:
\begin{eqnarray}
		x^{(t+1)} &=& A x^{(t)}  \nonumber \\
		y^{(t)} &=& C x ^{(t)} + e^{(t)}
		\label{eq:system_model}
\end{eqnarray}
where $x^{(t)}\in \mathbb{R}^n$ represents the state of the system at time $t\in \mathbb{N}$, $y^{(t)} \in \mathbb{R}^p$ is the output of the sensors at time $t$ and $e^{(t)} \in \mathbb{R}^p $ represents attack signals injected by malicious agents at the sensors. 

In \cite{Fawzi2014}, the authors proposed an elegant state estimation algorithm against adversarial attacks, where the attack signals can be unbounded and arbitrary, but they assumed that the set of attacked nodes $K$ does not change over time. More precisely, if $K \subset \{1, ..., p\}$ is the set of nodes that were attacked, then we have for all $t$, $\textsf{supp}(e^{(t)})\subset K$. 
The problem of reconstructing the initial state $x^{(0)}$ of the plant from the corrupted observations $( y^{(t)})$ was formulated as follows:

\begin{Def} \emph{(\hspace{1sp}\cite{Fawzi2014})} 
$q$ errors are correctable after $T$ steps by the decoder $\mathcal{D}: {(\mathbb{R} ^p) } ^T  \rightarrow \mathbb{R}^n$ if for any $x^{(0)} \in \mathbb{R}^n$, any $K \subset \{1,.., p\} $ with $\lvert K \rvert \le q$, and any sequence of vectors $e^{(0)},...,e^{(T-1)}$ in $\mathbb{R}^p$ such that $\textsf{supp}(e^{(t)}) \subset K$, we have $\mathcal{D} (y^{(0)},...,y^{(T-1)}) = x^{(0)}$ where $y^{(t)} = CA^t x^{(0)} + e^{(t)}$ for $t=0,...,T-1$.
\end{Def}

\begin{prop}  \emph{(\hspace{1sp}\cite{Fawzi2014})}  \label{prop:Fawzi} 
Let $T \in \mathbb{N}  \backslash \{ 0\}$. The following are equivalent:\\
(i) There is a decoder that can correct $q$ errors after $T$ steps;\\
(ii) For all $z\in \mathbb{R}^n \backslash \{0\}$, $\lvert \textsf{supp}(Cz) \cup \textsf{supp}(CAz) \cup \cdots \cup \textsf{supp}(CA^{T-1} z) \rvert > 2q$.
\end{prop}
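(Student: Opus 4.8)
The plan is to prove the two implications separately using the standard ``unique decoding'' argument from error-correcting codes, viewing the linear map $z \mapsto (Cz, CAz, \ldots, CA^{T-1}z)$ as a code whose decodability is controlled precisely by the quantity appearing in (ii).

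For the direction (ii) $\Rightarrow$ (i), I would exhibit the decoder by brute force: given $(y^{(0)},\ldots,y^{(T-1)})$, let $\mathcal{D}$ return any $\hat{x}\in\mathbb{R}^n$ for which there exist a set $K$ with $|K|\le q$ and vectors $e^{(t)}$ with $\textsf{supp}(e^{(t)})\subset K$ and $y^{(t)}=CA^t\hat{x}+e^{(t)}$ for all $t$. At least one such $\hat{x}$ exists, namely the true $x^{(0)}$, so the only thing to verify is that this consistent state is \emph{unique}; uniqueness then forces $\mathcal{D}(y^{(0)},\ldots,y^{(T-1)})=x^{(0)}$. If $\hat{x}$ and $\tilde{x}$ were both consistent, with associated sets $K,\tilde{K}$ and errors $e^{(t)},\tilde{e}^{(t)}$, subtracting the two families of equations gives $CA^t(\hat{x}-\tilde{x})=\tilde{e}^{(t)}-e^{(t)}$, so $\textsf{supp}(CA^t(\hat{x}-\tilde{x}))\subseteq K\cup\tilde{K}$ for every $t$, whence $\bigcup_{t=0}^{T-1}\textsf{supp}(CA^t(\hat{x}-\tilde{x}))\subseteq K\cup\tilde{K}$, a set of size at most $2q$. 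By (ii) this is possible only if $\hat{x}-\tilde{x}=0$.

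For (i) $\Rightarrow$ (ii), I would argue the contrapositive. Suppose (ii) fails, so there is $z\neq 0$ with $S:=\bigcup_{t=0}^{T-1}\textsf{supp}(CA^t z)$ of size $|S|\le 2q$. Split $S$ into two \emph{disjoint} sets $S=K_1\cup K_2$ with $|K_1|\le q$ and $|K_2|\le q$, which is possible since $|S|\le 2q$. Now compare two scenarios: $x^{(0)}_1=0$ with attacks $e^{(t)}_1:=(CA^t z)|_{K_1}$, and $x^{(0)}_2=z$ with attacks $e^{(t)}_2:=-(CA^t z)|_{K_2}$, where $(\cdot)|_{K}$ zeroes out coordinates outside $K$. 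Then $\textsf{supp}(e^{(t)}_i)\subset K_i$ with $|K_i|\le q$, and because $\textsf{supp}(CA^t z)\subseteq S=K_1\cup K_2$ with $K_1,K_2$ disjoint we have $CA^t z=(CA^t z)|_{K_1}+(CA^t z)|_{K_2}$, so $CA^t x^{(0)}_1+e^{(t)}_1=(CA^t z)|_{K_1}=CA^t z-(CA^t z)|_{K_2}=CA^t x^{(0)}_2+e^{(t)}_2$ for every $t$. Thus both scenarios generate identical observations while $x^{(0)}_1\neq x^{(0)}_2$, so no decoder can return the correct initial state in both cases, and (i) fails.

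The argument is essentially routine; the one place to be careful is the converse, where one must (a) split $S$ into two \emph{disjoint} pieces each of size at most $q$ — always possible since $|S|\le 2q$ — and (b) check that the restrictions to $K_1$ and $K_2$ reassemble exactly to $CA^t z$, which relies on $\textsf{supp}(CA^t z)\subseteq S$. In the forward direction the only subtlety is noticing that the brute-force decoder is well defined precisely because (ii) guarantees uniqueness of the consistent state; everything else is linear bookkeeping.
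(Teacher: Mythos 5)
Your proof is correct: the paper itself gives no proof of this proposition (it is quoted verbatim from \cite{Fawzi2014}), and your argument is essentially the standard one from that reference --- the contrapositive construction of two indistinguishable scenarios via a disjoint split $S=K_1\cup K_2$ for (i)$\Rightarrow$(ii), and uniqueness of the consistent state for (ii)$\Rightarrow$(i). The only cosmetic difference is that \cite{Fawzi2014} phrases the forward direction via the explicit $l_0$ row-support-minimizing decoder rather than your ``return any consistent $\hat{x}$'' decoder, but the uniqueness argument underneath is identical.
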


\noindent The authors then proposed the optimal decoder:
\begin{equation}
x^{(0)} = \text{arg} \min_{x} \norm { Y^{(T)} - \Phi ^{(T)} (x) }_{l_0} 
\label{eq:opt_decoder}
\end{equation}
where the $l_0$ ``norm" of matrix $M$ is the number of nonzero rows in $M$ \cite{Fawzi2014}:
\begin{equation}
	\norm{M} _{l_0} \triangleq \lvert \textsf{rowsupp} (M) \rvert = \lvert \{ i \in \{ 1, ... , p\} | M_i \neq 0 \}  \rvert 
	\label{eq:decoder_row}\nonumber
\end{equation}
where $M_i$ represents the $i$-th row of $M$. $Y^{(T)} = \begin{bmatrix} y^{(0)} & \lvert & y^{(1)} & \lvert  & ... & \lvert & y^{(T-1)} \end{bmatrix} \in \mathbb{R}^{p\times T}$ and $\Phi^{(T)}$ is a linear map such that $\Phi ^{(T)} (x) = \begin{bmatrix} Cx  & \lvert & CAx  & \lvert &  ... &  \lvert & CA^{T-1} x \end{bmatrix} \in \mathbb{R}^{p\times T}$. Therefore
\begin{equation}\nonumber
Y^{(T)} - \Phi ^{(T)} (x) =  \begin{bmatrix} e^{(0)}  & \lvert & e^{(1)} & \lvert &  ... &  \lvert & e^{(T-1)} \end{bmatrix},
\end{equation}
which we refer to as the error matrix in the sequel, and consists of horizontally stacked attack/error vectors from time $t=0$ to $t=T-1$.
In other words, the decoder finds the $x^{(0)}$ that minimizes the number of nonzero rows in the resulting error matrix. The foundation for this decoder to work is the assumption of fixed attacked nodes.
As an illustration, consider a system with a single attacked node and construct the error matrix
\begin{equation}
	\begin{bmatrix} e^{(0)}  & \lvert & e^{(1)} & \lvert &  ... &  \lvert & e^{(T-1)} \end{bmatrix} = \begin{bmatrix} 0 & 0 & \cdots & 0 \\
					       * & * & \cdots & * \\
					       0 & 0 & \cdots & 0 \\
					       0 & 0 & \cdots & 0 
			\end{bmatrix}, \nonumber \\
\end{equation}
where $*$ denotes a nonzero component (i.e., attack or corruption). Observe that the set of nonzero rows corresponds to the set of attacked nodes, and furthermore, if a small number of nodes are attacked, then necessarily only a small number of rows are not identically zero and the error matrix has a small row support. The decoder in \cite{Fawzi2014} works by leveraging this property of fixed attacked nodes. 

What happens if the attacked nodes can change over time? To answer this, let's look at another example where the system has a single attacked node again, however the attacked node is cycled through all nodes such that the error matrix is as follows
\begin{equation}
	\begin{bmatrix} e^{(0)}  & \lvert & e^{(1)} & \lvert &  ... &  \lvert & e^{(T-1)} \end{bmatrix} = \begin{bmatrix} * & 0 & \cdots & 0 \\
					       0 & * & \cdots & 0 \\
					       0 & 0 & \cdots & 0 \\
					       0 & 0 & \cdots & * 
			\end{bmatrix}. \nonumber \\
\end{equation}
Observe that although a single node is attacked at any time $t$, since the attacked node is changing over time, the error matrix has full row support. Therefore a decoder that attempts to find the optimal $x^{(0)}$ by minimizing the row support of the error matrix does not work here. One may argue that this decoder can be used with $T=1$, to solve a secure estimation problem where the attacked nodes change over time. 
However, when $T=1$, the number of correctable errors cannot be large.
From Proposition 1 (i.e. Proposition 2 in \cite{Fawzi2014}), ``the decoder can correct $q$ errors after $T=1$ step if and only if for all $z \in \mathbb{R}^n \backslash \{0\}$, $\lvert \textsf{supp} (Cz) \rvert > 2q$''. 
As an example, for any $C \in \mathbb{R}^{p \times n}$ where $p<n$, $C$ has a nontrivial null space, hence there exists $z \in \mathbb{R}^n \backslash \{ 0 \}$ such that $ \lvert \textsf{supp} (Cz) \rvert = 0$; in other words, zero errors are correctable. Therefore, the decoder for fixed attacked nodes cannot be easily extended to when the attacked nodes can change over time, and a new decoder is required for the latter.

In this paper, we propose such a new decoder, one that is suitable for when the set of attacked nodes can change over time. Our decoder is inspired by the error correction problem \cite{Candes_Tao} and by observing that if we construct the error matrix by stacking the error vectors $e^{(0)}, \cdots, e^{(T-1)}$ vertically, instead of horizontally, then even when the attacked nodes can change over time, as long as the number of attacked nodes at each time $t$ is small, the error matrix (which is a large error vector in this case) is sparse. Before we dive into details of this method and its properties, we first give a brief overview about compressive sensing and error correction in the following section.


\subsection{Overview: Compressive Sensing and the Error Correction Problem \cite{Candes_Tao}} 

\subsubsection{Compressive Sensing}
Sparse solutions $x\in \mathbb{R}^n$, are sought to the following problem:
\begin{equation}
	\min_x \norm{x}_0 \text{ subject to } b= Ax
	\label{eq:CS}
\end{equation}
where $b \in \mathbb{R}^m$ are the measurements, $A \in \mathbb{R}^{m\times n}~ (m \ll n)$ is a sensing matrix and $\norm{x}_0$ denotes the number of nonzero elements of $x$. The following lemma provides a sufficient condition for a unique solution to (\ref{eq:CS}) \cite{Candes_Tao}:

\begin{lem} \emph{(\hspace{1sp}\cite{David_Chang})} \label{lem:CS}
If the sparsest solution to (\ref{eq:CS}) has $\norm{x}_0 = q$, $m\ge 2q$ and all subsets of $2q$ columns of $A$ are full rank, then the solution is unique. 
\end{lem}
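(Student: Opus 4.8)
The plan is to argue by contradiction, exploiting the fact that the difference of two candidate solutions lies in the null space of $A$ and has small support. First I would suppose, for contradiction, that there exist two distinct vectors $x_1 \neq x_2$ in $\mathbb{R}^n$, both feasible for (\ref{eq:CS}) and both achieving the minimal value, i.e. $A x_1 = A x_2 = b$ and $\norm{x_1}_0 = \norm{x_2}_0 = q$. Setting $z := x_1 - x_2$, linearity gives $A z = 0$, so $z \in \mathcal{N}(A)$, and $z \neq 0$ by assumption.

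Next I would bound the support of $z$. Since $\textsf{supp}(z) \subseteq \textsf{supp}(x_1) \cup \textsf{supp}(x_2)$, we have $\lvert \textsf{supp}(z) \rvert \le \lvert \textsf{supp}(x_1) \rvert + \lvert \textsf{supp}(x_2) \rvert = 2q$. Let $S \subseteq \{1,\dots,n\}$ be any index set of size exactly $2q$ with $\textsf{supp}(z) \subseteq S$; such a set exists because the hypothesis that some $2q$ columns of $A$ are full rank already forces $n \ge 2q$. Write $A_S \in \mathbb{R}^{m \times 2q}$ for the submatrix of $A$ formed by the columns indexed by $S$, and let $z_S \in \mathbb{R}^{2q}$ collect the corresponding entries of $z$. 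Because all entries of $z$ outside $S$ vanish, $0 = A z = A_S z_S$ with $z_S \neq 0$, so $A_S$ has a nontrivial null space and hence is not of full column rank $2q$.

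This contradicts the hypothesis that every subset of $2q$ columns of $A$ is full rank; note that this hypothesis is only meaningful when $m \ge 2q$, which is exactly the remaining assumption, since an $m \times 2q$ matrix can have rank $2q$ only if $m \ge 2q$. Therefore no such pair $x_1 \neq x_2$ can exist, and the sparsest solution to (\ref{eq:CS}) is unique.

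I do not expect a real obstacle here: the argument is a short contradiction, and the only point requiring care is the bookkeeping on supports together with the observation that ``full rank'' for the tall-or-square matrix $A_S \in \mathbb{R}^{m \times 2q}$ must mean full \emph{column} rank $2q$, which is precisely what forbids a nonzero kernel vector. The assumption $m \ge 2q$ plays no role beyond making that rank attainable.
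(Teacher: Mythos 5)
Your proof is correct and follows essentially the same route as the paper's: assume two distinct sparsest solutions, observe their difference is a nonzero null-space vector of support at most $2q$, and contradict the full-rank hypothesis on every set of $2q$ columns. You merely spell out the submatrix $A_S$ and the meaning of ``full rank'' more explicitly than the paper does.
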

\begin{proof}
Suppose the solution is not unique, hence there exists $x_a \neq  x_b$ such that $Ax _1 = b$ and $Ax_2 = b$ where $\norm{x_1}_0 = \norm{x_2}_0 = q$. Then $A(x_1 - x_2) = 0$ and $x_1 - x_2 \neq 0$. Since $\norm{x_1-x_2}_0 \leq 2q$ and all $2q$ columns of $A$ are full rank (i.e. linearly independent), it is impossible to have $x_1-x_2\neq 0$ that satisfies $A(x_1-x_2) = 0$ (contradiction).
\end{proof}

\subsubsection{The Error Correction Problem  \cite{Candes_Tao}} \label{sec:error_correction}
Consider the classical error correction problem: $y=Cx + e$ where $C\in \mathbb{R}^{p\times n}$ is a coding matrix $(p > n)$ and assumed to be full rank. We wish to recover the input vector $x \in \mathbb{R}^n$ from corrupted measurements $y$. Here, $e$ is an arbitrary and unknown sparse error vector. To reconstruct $x$, note that it is obviously sufficient to reconstruct the vector $e$ since knowledge of $Cx + e$ together with $e$ gives $Cx$, and consequently $x$ since $C$ has full rank \cite{Candes_Tao}. In \cite{Candes_Tao}, the authors construct a matrix $F$ which annihilates $C$ on the left, i.e.  $FCx = 0$ for all $x$. Then, they apply $F$ to the output $y$ and obtain
\begin{equation}
	\tilde y = F (Cx + e) = Fe 
\end{equation}
Thus, the decoding problem can be reduced to that of reconstructing a sparse vector $e$ from the observations $\tilde y = Fe$. Therefore, by Lemma \ref{lem:CS}, if all subsets of $2q$ columns of $F$ are full rank, then we can reconstruct any $e$ whose  $\lvert \textsf{supp}(e) \rvert \le q$. We refer to a decoder that can correct $q$ errors as a $q$-error-correcting decoder.

\subsubsection{ Equivalence between the two programs $l_0$ and $l_1$}\label{sec:equiv}
In \cite{Candes_Tao}, the authors mentioned that the computational intractability of the $l_0$-program led researchers to develop alternatives, and a frequently discussed approach considers a similar program in the $l_1$ norm which goes by the name of Basis Pursuit. Motivated by the problem of finding sparse decompositions of special signals in the field of mathematical signal processing, a series of beautiful and ground breaking works \cite{Donoho2003, Elad2002, Gribonval2003, Tropp2004} showed exact equivalence between the two programs $l_0$ and $l_1$ when the RIP conditions are satisfied. Therefore, the $l_0$ norm in (\ref{eq:CS}) can be approximated by an $l_1$ norm to give a convex decoder and is therefore computationally feasible. 
We will discuss in more detail the conditions required to ensure accurate decoding using an $l_1$-optimization based decoder in Section \ref{sec:design}.

\section{Secure Estimation when Attacked Nodes can Change with Time}  \label{sec:main}

\subsection{Problem Formulation}
Consider the linear time invariant system as follows:
\begin{equation}
\begin{aligned}
x^{(t+1)} &= A_o x^{(t)} + B u^{(t)} \\
y^{(t)} &= C x ^{(t)} + e^{(t) }
\end{aligned} 
\label{eq:system_model_se}
\end{equation} 
where $x^{(t)} \in \mathbb{R}^n$, $y^{(t)}  \in \mathbb{R}^p$ and $u^{(t)} \in \mathbb{R}^m$ are the states, measurements and control inputs at time $t$. $e^{(t)} \in \mathbb{R}^p$ is the attack signal, and we assume that the attacked nodes can change over time. We define the number of correctable errors as follows:

\begin{Def}\label{def:num_err_change}
When the set of attacked nodes can change over time, $q$ errors are correctable after $T$ steps by the decoder $\mathcal{D}: {(\mathbb{R} ^p) } ^T  \rightarrow \mathbb{R}^n$ if for any $x^{(0)} \in \mathbb{R}^n$ and any sequence of vectors $e^{(0)},...,e^{(T-1)}$ in $\mathbb{R}^p$ such that $\lvert \textsf{supp}(e^{(t)}) \rvert \leq q$, 
we have $\mathcal{D} (y^{(0)},...,y^{(T-1)}) = x^{(0)}$ where $y^{(t)} = CA^t x^{(0)} + e^{(t)}$ for $t=0,...,T-1$.
\end{Def}

In addition, assume that a local control loop implements secure state feedback: $u^{(t)} = Gx^{(t)}$, i.e. the local control loop is not subject to attack. The resulting closed loop system matrix is $A$ $(=A_o+BG)$. 
In practice, this represents the following scenario: a physical system possesses a local control loop that has direct access to the state of the plant and can control the evolution of the physical system. This is reasonable if the sensors are connected to the local controller through a wired link that is not subject to external attacks. Also, as part of the overall plant, a higher-level supervisory and monitoring system receives measurements from the sensors through wireless and vulnerable communication links that are subject to attacks \cite{Fawzi2014}. 
A concrete example is a UAV that uses measurements from onboard, hardwired sensors such an Inertial Measurement Unit (IMU) for its autopilot and trajectory following (i.e. secure local control loop), and communicates wirelessly with a remote control center (i.e. vulnerable link subject to attack).


\subsection{Reformulation into an Error Correction Problem} \label{sec:err_corr}
The problem that we want to solve is how to reconstruct the initial state $x^{(0)}$ of the system from the corrupted observations ($y^{(t)}$) where $t=0,...,T-1$.
Let $E_{q,T}$ denote the large vector with error vectors from time $t=0$ to $t=T-1$ stacked vertically: $\begin{bmatrix} e^{(0)}; ~ ...~  ;  e^{(T-1)} \end{bmatrix}   \in  \mathbb{R}^{p\cdot T} $, where each $e^{(t)}$ satisfies $\lvert \textsf{supp}(e^{(t)}) \rvert \le q \le p $ as in Definition \ref{def:num_err_change}. 
\begin{eqnarray} \label{eq:sys_err_corr}
\begin{aligned}
	Y &\triangleq \begin{bmatrix} y^{(0)} \\ y^{(1)} \\ \vdots \\ y^{(T-1)} \end{bmatrix} 
		= \begin{bmatrix} Cx^{(0)} + e^{(0)} \\ CA x^{(0)} + e^{(1)} \\ \vdots \\ CA^{T-1} x^{(0)} + e^{(T-1)} \end{bmatrix} \\
		& =
		\begin{bmatrix} C \\ CA \\ \vdots \\ CA^{T-1} \end{bmatrix} x^{(0)} + E_{q,T} \triangleq \Phi x^{(0)} + E_{q,T}
		\label{eq:decoder_Phi}
\end{aligned}
\end{eqnarray}
where $Y \in \mathbb{R}^{p\cdot T}$ is the set of vertically stacked corrupted measurements and $\Phi \in \mathbb{R}^{p\cdot T \times n}$ represents the observability matrix of the closed loop system. 
Assume $\text{rank}(\Phi) = n$, which is reasonable because otherwise, the system is unobservable and thus $x^{(0)}$ cannot be determined even if there is no attack ($E_{q,T} = 0$).
\begin{itemize}
\item Open-loop case $(B=0)$:  A full column rank condition represents the pair $(A_o,C)$ being observable. In other words, if not, one cannot reconstruct $x^{(0)}$ even if there are no errors in the measurements.
\item State-feedback case: Since state-feedback may affect the observability of a system (even though the pair $(A_o,C)$ is observable), we have to satisfy $\text{rank}(\Phi) = n$ for the closed-loop system with state-feedback. 
\end{itemize}
Note that the closed-loop system with state-feedback is controllable if and only if the open-loop system is controllable. However, state-feedback may affect the observability of a system. 

Proposition \ref{prop:equivalent} below, gives two equivalent sufficient conditions for the existence of a unique solution $x^{(0)}$ to (\ref{eq:decoder_Phi}).

\begin{prop} \label{prop:equivalent}
	Given $Y=\Phi x + E$, $\Phi \in \mathbb{R}^{p\cdot T \times n}$ is full rank and $2s = 2q\cdot T \leq  p\cdot T-n$, then the following are equivalent: \\
$(i)$ all subsets of $2s$ columns of $Q_2 ^\top$ are linearly independent, where  $\Phi = \begin{bmatrix} Q_1 & Q_2 \end{bmatrix} \begin{bmatrix} R_1 \\ 0 \end{bmatrix}$ is the QR decomposition of $\Phi$ and $Q_2 ^\top \in \mathbb{R}^{(p\cdot T-n)\times p\cdot T}$;\\
$(ii)$ $\lvert \textsf{supp}( \Phi z) \rvert > 2 s$ for all $z \in \mathbb{R}^n \backslash \{ 0 \}$.
\end{prop}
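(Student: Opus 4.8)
The plan is to recognize $Q_2^\top$ as exactly the left-annihilator matrix from the error-correction reduction of Section~\ref{sec:error_correction} applied to the coding matrix $\Phi$, and then to show that both $(i)$ and $(ii)$ are merely two restatements of the single property ``$\mathcal{N}(Q_2^\top)$ contains no nonzero vector whose support has size at most $2s$.'' This is the same mechanism as in Lemma~\ref{lem:CS} and Section~\ref{sec:error_correction}, with the correspondence $C \leftrightarrow \Phi$ and $F \leftrightarrow Q_2^\top$.

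First I would record the linear-algebra facts coming from the QR decomposition $\Phi = \begin{bmatrix} Q_1 & Q_2\end{bmatrix}\begin{bmatrix} R_1 \\ 0\end{bmatrix}$. Since $\begin{bmatrix} Q_1 & Q_2\end{bmatrix}$ is orthogonal, $Q_2^\top Q_1 = 0$, hence $Q_2^\top \Phi = Q_2^\top Q_1 R_1 = 0$, i.e. $Q_2^\top$ annihilates $\Phi$ on the left; and since $Q_2$ has orthonormal columns, $\mathrm{rank}(Q_2^\top) = p\cdot T - n$. Writing $F := Q_2^\top$, a dimension count gives $\dim \mathcal{N}(F) = p\cdot T - \mathrm{rank}(F) = n$, while $\mathcal{R}(\Phi) \subseteq \mathcal{N}(F)$ from $F\Phi = 0$, with $\dim \mathcal{R}(\Phi) = n$ because $\Phi$ is full column rank; therefore $\mathcal{N}(F) = \mathcal{R}(\Phi)$. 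Moreover, full column rank makes $z \mapsto \Phi z$ a linear isomorphism from $\mathbb{R}^n$ onto $\mathcal{R}(\Phi)$, so in particular $\Phi z \neq 0$ whenever $z \neq 0$.

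Next I would unwind the two statements against these facts. For $(i)$: a set of $2s$ columns of $F$ indexed by $S$ with $|S| = 2s$ is linearly dependent precisely when there is a nonzero $v$ with $\textsf{supp}(v) \subseteq S$ and $Fv = 0$; hence ``all $2s$-column subsets of $F$ are independent'' is equivalent to ``there is no nonzero $v \in \mathcal{N}(F)$ with $|\textsf{supp}(v)| \le 2s$,'' where for the ``$\le$'' one pads any smaller support up to size $2s$ (legitimate since $2s \le p\cdot T - n < p\cdot T$). For $(ii)$: using the isomorphism and $\mathcal{R}(\Phi) = \mathcal{N}(F)$, the set $\{\Phi z : z \in \mathbb{R}^n\setminus\{0\}\}$ is exactly $\mathcal{N}(F)\setminus\{0\}$, so ``$|\textsf{supp}(\Phi z)| > 2s$ for all $z \neq 0$'' is the same as ``no nonzero $v \in \mathcal{N}(F)$ has $|\textsf{supp}(v)| \le 2s$.'' Since $(i)$ and $(ii)$ reduce to the identical statement, they are equivalent.

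I expect the only real subtlety — not a deep obstacle, but the place requiring care — to be the bookkeeping around support sizes: moving between ``exactly $2s$ columns'' and ``at most $2s$ columns,'' verifying the padding step is valid, and noting that the hypothesis $2s = 2qT \le p\cdot T - n$ is precisely what ensures that having $2s$ linearly independent columns in the $(p\cdot T - n)$-row matrix $Q_2^\top$ is not automatically impossible, so that $(i)$ is a meaningful condition (mirroring the role of $m \ge 2q$ in Lemma~\ref{lem:CS}). Everything else is the routine identities established above.
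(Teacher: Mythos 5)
Your proof is correct and rests on the same two facts the paper's proof uses: the identity $\mathcal{N}(Q_2^\top) = \mathcal{R}(\Phi)$ and the correspondence between linearly dependent $2s$-column subsets of $Q_2^\top$ and nonzero null vectors of support at most $2s$. The paper merely packages this as two separate contradiction arguments (its $(i)\Rightarrow(ii)$ direction additionally splits $\Phi z$ into $E_1 - E_2$ in the style of Lemma~\ref{lem:CS}, a step your direct reduction to a single common statement shows to be inessential), so the underlying approach is the same.
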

\begin{proof}
$(ii) \implies (i)~:$ Suppose there exist $2s$ columns of $Q_2^\top$ that are linearly dependent. Then, there exists $E_0 \neq 0$ such that $Q_2^\top E_0 = 0$ where $\lvert \textsf{supp}(E_0) \rvert \le 2s$. Observe that $\mathcal{N}(Q_2^\top) = \mathcal{R}(\Phi)$, therefore there exists $z$ such that $E_0 = \Phi z$ (i.e., $E_0 \in \mathcal{R}(\Phi)$). Then, $ \lvert  \textsf{supp}(\Phi z) \rvert = \lvert \textsf {supp} (E_0) \rvert \le 2s $ (contradiction).

$(i) \implies (ii)$: We again resort to contradiction. Suppose that there exists $z\neq 0$ such that $\lvert \textsf{supp}(\Phi z)\rvert \le 2s$. Let $L_1$ and $L_2$ be two disjoint subsets of $\{1,..., p\cdot T \}$ with $\lvert L_1 \rvert \le s$ and $\lvert L_2 \rvert \le s$ such that $L_1 \oplus L_2 = \textsf{supp}(\Phi z)$ (such $L_1$ and $L_2$ exist since $\lvert \textsf{supp}(\Phi z) \rvert \le 2s < p\cdot T$). Let $E_1 = \Phi z \lvert _{L_1}$ be the vector obtained from $\Phi z$ by setting all the components outside of $L_1$ to 0, and similarly let $E_2= - \Phi z \lvert _{L_2}$ (i.e., $E_1 \neq E_2$). Then we have $\Phi z = E_1 - E_2$ with $\textsf{supp}(E_1) \subset L_1$ and $\textsf{supp}(E_2) \subset L_2$ with $\lvert L_1 \rvert \le s$ and $\lvert L_2 \rvert \le s$. Now, consider $Y=\Phi z + E$
\begin{equation}
\begin{aligned}
	\tilde Y &= Q_2^\top Y =  Q_2^\top (\Phi z+E) =  Q_2^\top (E_1 - E_2 + E  ) \\
	& =  Q_2^\top(E_1-E_2) +  Q_2^\top E =  Q_2^\top E \\
	& \implies  Q_2^\top (E_1-E_2) = 0 \nonumber 
\end{aligned}
\end{equation}
The last equality is due to $\mathcal{N}(Q_2^\top) = \mathcal{R}(\Phi)$ (i.e. $Q_2^\top \Phi = 0$). Since all subsets of $2s$ columns of $Q_2^\top$ are linearly independent, $E_1 - E_2 = 0 $ (contradiction).
\end{proof}

Next, we present two methods, inspired by error correction techniques \cite{Candes_Tao}\cite{David_Chang}, for estimating $x^{(0)}$. We show their equivalence and provide sufficient conditions for the existence of a unique solution.

\noindent
{\bf Decoder 1:~} The first method determines the error vector $E_{q,T}$ first, and then solves for $x^{(0)}$. 
Consider the $QR$ decomposition of $\Phi \in \mathbb{R}^{p\cdot T \times n}$,
\begin{eqnarray}
	\Phi = \begin{bmatrix} Q_1 & Q_2 \end{bmatrix} \begin{bmatrix} R_1 \\ 0 \end{bmatrix} = Q_1 R_1 
\end{eqnarray}
where $\begin{bmatrix} Q_1 & Q_2 \end{bmatrix} \in \mathbb{R}^{p\cdot T \times p\cdot T}$ is orthogonal, $Q_1 \in \mathbb{R}^{p\cdot T\times n}, Q_2 \in \mathbb{R}^{p\cdot T \times (p\cdot T-n)}$ and $R_1 \in \mathbb{R}^{n\times n}$ is a rank-$n$ upper triangular matrix. Pre-multiply (\ref{eq:decoder_Phi}) by $\begin{bmatrix} Q_1 & Q_2 \end{bmatrix} ^\top$:
\begin{equation}
	\begin{bmatrix} Q_1 ^\top \\ Q_2 ^\top \end{bmatrix} Y = \begin{bmatrix}R_1 \\ 0  \end{bmatrix} x^{(0)} + \begin{bmatrix} Q_1 ^\top \\ Q_2^\top \end{bmatrix} E_{q,T}.
	\label{eq:QR}
\end{equation}
We can now solve for $E_{q,T}$ using the second block row:
\begin{equation}
	\tilde Y \triangleq Q_2^\top Y = Q_2^\top E_{q,T}
	\label{eq:E_est}
\end{equation}
where $Q_2^\top \in \mathbb {R} ^{ (p\cdot T-n) \times p\cdot T}$. From Lemma \ref{lem:CS}, Equation (\ref{eq:E_est}) has a unique, $s(\le q\cdot T)$-sparse solution if all subsets of $2s(\le2 q\cdot T)$ columns of $Q_2^\top$ are full rank (this is a reasonable assumption if $ (p\cdot T-n) \ge 2s = 2q\cdot T$). 
Therefore, the secure decoder has two steps, we first solve the following $l_1$-minimization problem
\begin{equation}
	\hat E_{q,T} = \arg \min_E \norm { E}_{l_1} \text{ subject to } \tilde Y = Q_2^\top E 
	\label{eq:solve_E}
\end{equation}
Note that the $l_0$ norm in (\ref{eq:CS}) was replaced with with the $l_1$ norm in (\ref{eq:solve_E}) to give a convex program, which was proposed by Candes and Tao in \cite{Candes_Tao}. In addition, in Section \ref{sec:equiv} we discussed the conditions for equivalence of these two programs.

With $\hat E_{q,T}$ in hand, we then solve for $x^{(0)}$ in the second step using the first block row of Equation (\ref{eq:QR}):
\begin{equation}
	x^{(0)} = R_1^{-1} Q_1^\top (Y- \hat E_{q,T})
	\label{eq:QR1}
\end{equation}

\noindent
{\bf Decoder 2:~} 
The second method recovers $x^{(0)}$ from the corrupted data $Y$ directly by 
solving the following $l_1$-minimization problem \cite{Candes_Tao}:
\begin{equation}
	\min_x \norm { Y  - \Phi x}_{l_1}
	\label{eq:direct_l1}
\end{equation}

\begin{lem} \label{lem:equivalent}
 $x^{(0)}$ is the unique solution of (\ref{eq:direct_l1}) if and only if ${\hat E}_{q,T}$ is the unique solution of (\ref{eq:solve_E}).
\end{lem}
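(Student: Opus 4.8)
The plan is to show that (\ref{eq:solve_E}) and (\ref{eq:direct_l1}) are one and the same optimization problem written in two coordinate systems related by an invertible linear change of variables, so that their minimizers are in bijection and the uniqueness statement transfers for free. In particular the equivalence will turn out to be unconditional, using only $\text{rank}(\Phi)=n$, and will additionally exhibit the relation (\ref{eq:QR1}) between the two solutions.

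First I would compute the feasible set of (\ref{eq:solve_E}). Since $\tilde Y = Q_2^\top Y$ by definition, the constraint $\tilde Y = Q_2^\top E$ is equivalent to $Q_2^\top (E-Y)=0$, i.e. $E - Y \in \mathcal{N}(Q_2^\top)$. From the $QR$ decomposition $\Phi = Q_1 R_1$ with $\begin{bmatrix} Q_1 & Q_2\end{bmatrix}$ orthogonal and $R_1$ invertible, one has $\mathcal{R}(\Phi) = \mathcal{R}(Q_1) = \mathcal{N}(Q_2^\top)$ — the same identity already used in the proof of Proposition \ref{prop:equivalent}. Because $\mathcal{R}(\Phi)$ is a subspace it is closed under sign change, so $E-Y \in \mathcal{R}(\Phi)$ is the same as $E = Y - \Phi x$ for some $x \in \mathbb{R}^n$. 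Hence the feasible set of (\ref{eq:solve_E}) is exactly $\{\, Y - \Phi x : x \in \mathbb{R}^n \,\}$, which is precisely the set over which the objective of (\ref{eq:direct_l1}) ranges.

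Next I would observe that $x \mapsto Y - \Phi x$ is a bijection from $\mathbb{R}^n$ onto that feasible set: surjectivity is the content of the previous paragraph, and injectivity follows from $\text{rank}(\Phi)=n$ (full column rank, so $\Phi x_1 = \Phi x_2$ forces $x_1 = x_2$). Under this bijection the two objectives coincide, $\norm{Y-\Phi x}_{l_1} = \norm{E}_{l_1}$ with $E = Y-\Phi x$, so $x$ is a minimizer of (\ref{eq:direct_l1}) if and only if $E = Y - \Phi x$ is a minimizer of (\ref{eq:solve_E}); and since the map is a bijection, ``the minimizer is unique'' on one side is equivalent to ``the minimizer is unique'' on the other. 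Writing this out, $x^{(0)}$ is the unique solution of (\ref{eq:direct_l1}) iff $\hat E_{q,T} := Y - \Phi x^{(0)}$ is the unique solution of (\ref{eq:solve_E}); and premultiplying $Y - \hat E_{q,T} = \Phi x^{(0)} = Q_1 R_1 x^{(0)}$ by $Q_1^\top$ recovers $x^{(0)} = R_1^{-1} Q_1^\top (Y - \hat E_{q,T})$, i.e. (\ref{eq:QR1}).

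There is essentially no hard step here; the only points requiring care are (a) the null-space/range identity $\mathcal{N}(Q_2^\top) = \mathcal{R}(\Phi)$, together with the observation that $\mathcal{R}(\Phi)$ being a subspace makes the sign flip in $E - Y = -\Phi x$ harmless, and (b) invoking full column rank of $\Phi$ at the right moment, since that is exactly what upgrades the correspondence of minimizers to a correspondence of \emph{unique} minimizers rather than a mere correspondence of feasible points and optimal values.
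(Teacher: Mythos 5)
Your proof is correct and takes essentially the same route as the paper's (borrowed from Cand\`es--Tao): both arguments identify the feasible set of (\ref{eq:solve_E}) with the affine set $\{\,Y-\Phi x : x\in\mathbb{R}^n\,\}$ via $\mathcal{N}(Q_2^\top)=\mathcal{R}(\Phi)$ and observe that the objectives then coincide, the paper writing this in the shifted variable $v=x-x^{(0)}$ so that $E=E_{q,T}-\Phi v$. Your version is slightly more complete in that it makes explicit the role of $\operatorname{rank}(\Phi)=n$ in turning the correspondence of minimizers into a correspondence of \emph{unique} minimizers, a point the paper leaves implicit.
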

\begin{proof} (By \cite{Candes_Tao}) Observe that on one hand, since $Y = \Phi x^{(0)} + E_{q,T}$ and we may decompose $x = x^{(0)} + v$, hence 
\begin{equation}
	(\ref{eq:direct_l1}) \Leftrightarrow \min_v \norm{  E_{q,T} - \Phi v }_{l_1} \nonumber 
\end{equation}
On the other hand, the constraint $Q_2^\top E = \tilde Y = Q_2^\top E_{q,T}$ means that $E = E_{q,T} - \Phi v $ for some $v \in \mathbb{R}^n$ and, therefore,
\begin{eqnarray}
	(\ref{eq:solve_E}) &\Leftrightarrow& \min_v \norm{ E }_{l_1}, ~~~~ E = E_{q,T}-\Phi v  \nonumber \\
				 & \Leftrightarrow& \min_v  \norm{E_{q,T}-\Phi v}_{l_1} \nonumber 
\end{eqnarray}
Thus, (\ref{eq:solve_E}) and (\ref{eq:direct_l1}) are equivalent programs \cite{Candes_Tao}.
\end{proof}
\noindent Even though we are interested in the state $x^{(0)}$ and not necessarily the error vectors $E_{q,T}$, Lemma \ref{lem:equivalent} states that if the attack vectors cannot be uniquely determined from (\ref{eq:solve_E}), then we cannot estimate $x^{(0)}$ uniquely from (\ref{eq:direct_l1}). \cite{Fawzi2014} also mentioned this notion: the existence of a decoder that can correct $q$ errors is equivalent to saying that the map, $\begin{bmatrix} \Phi & {\bf I}_{q\cdot T} \end{bmatrix}: \mathbb{R}^n \times E_{q,T} \rightarrow (\mathbb{R}^p)^T$ has an inverse for the first $n$ components of its domain where $Y = \begin{bmatrix} \Phi & {\bf I}_{q\cdot T} \end{bmatrix} \begin{bmatrix} x^{(0)} \\ E_{q,T} \end{bmatrix} $ since the attack vectors are uniquely determined by $x^{(0)}$ and the $y^{(t)}$'s, i.e., $e^{(t)} = y^{(t)} - CA ^t x^{(0)}$.

\subsection{Comparison with Secure Estimation for Fixed Attacked Nodes}
We refer to a decoder that can correct $q$ errors as a $q$-error-correcting decoder.
It is interesting to compare the conditions for the existence of a $q$-error-correcting decoder for when the attacked nodes are fixed (Proposition \ref{prop:Fawzi}) and when the attacked nodes can change over time (Proposition \ref{prop:equivalent}) are :
$\forall z \in \mathbb{R}^n\backslash \{0 \}$,
\begin{equation}
\begin{aligned}
(i) &~  \lvert \textsf{supp}(Cz) \cup \textsf{supp}(CAz) \cup \cdots \cup \textsf{supp}(CA^{T-1} z) \rvert > 2q  \\& \text{ (the set of attacked nodes is fixed)}\\
(ii) & ~ \lvert \textsf{supp} (\Phi z) \rvert 
     =  \sum_{i=0}^{T-1} \lvert \textsf{supp} (C A^i z) \rvert > 2 q \cdot T \\& \text{ (the set of attacked nodes can change)} 
	\label{eq:connection}
\end{aligned}
\end{equation}
It is easy to see that when $T=1$, conditions $(i)$ and $(ii)$ are equivalent as both of them reduce to $\lvert \textsf{supp} (Cz) \rvert > 2q$ for all $z \in \mathbb{R}^n\backslash \{0 \}$.
When $T>1$ (i.e., with dynamics), the comparison is not so straightforward. In \cite{Fawzi2014}, the authors proved an equivalent condition for $(i)$:

\begin{lem} \label{lem:distinct}
Assume $A$ has $n$ distinct positive eigenvalues ($0<\lambda_1 < \lambda_2 <\cdots < \lambda_n$) and $T \geq n$. Then, the following are equivalent:
\begin{equation}
\begin{aligned}
 (i) &~ \forall z \in \mathbb{R}^n \backslash \{0\},  \\& \lvert \textsf{supp}(Cz) \cup \textsf{supp}(CAz) \cup \cdots \cup \textsf{supp}(CA^{T-1} z) \rvert > 2q  \\
 (ii)  &~\forall v_i \in \mathbb{R}^n \text{ where }   Av_i = \lambda_i v_i \text{ (i.e., eigenvector of $A$)},\\& ~ \lvert \textsf{supp}(Cv_i) \rvert > 2q \nonumber
\label{eq:condition}
\end{aligned}
\end{equation}
\end{lem}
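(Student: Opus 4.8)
The plan is to establish the equivalence by proving the two implications separately; the direction $(i)\Rightarrow(ii)$ is a one-line specialization, while $(ii)\Rightarrow(i)$ is the substantive part and rests on a Vandermonde determinant argument.

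For $(i)\Rightarrow(ii)$ I would simply feed an eigenvector into $(i)$. If $Av_i=\lambda_i v_i$, then $CA^k v_i=\lambda_i^k\,Cv_i$ for every $k$, and since $\lambda_i>0$ we have $\lambda_i^k\neq 0$, so $\textsf{supp}(CA^k v_i)=\textsf{supp}(Cv_i)$ for all $k$. Hence $\bigcup_{k=0}^{T-1}\textsf{supp}(CA^k v_i)=\textsf{supp}(Cv_i)$, and $(i)$ applied to $z=v_i$ gives $\lvert\textsf{supp}(Cv_i)\rvert>2q$.

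For $(ii)\Rightarrow(i)$, fix an arbitrary $z\in\mathbb{R}^n\setminus\{0\}$. Because $A$ has $n$ distinct eigenvalues, its eigenvectors $v_1,\dots,v_n$ form a basis of $\mathbb{R}^n$, so I can write $z=\sum_{i=1}^n c_i v_i$ and choose an index $i_0$ with $c_{i_0}\neq 0$. The heart of the argument is the claim $\textsf{supp}(Cv_{i_0})\subseteq\bigcup_{k=0}^{T-1}\textsf{supp}(CA^k z)$: granting this, monotonicity of cardinality together with $(ii)$ yields $\lvert\bigcup_{k=0}^{T-1}\textsf{supp}(CA^k z)\rvert\ge\lvert\textsf{supp}(Cv_{i_0})\rvert>2q$, which is $(i)$. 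To prove the claim, fix a coordinate $j\in\textsf{supp}(Cv_{i_0})$ and consider the scalar sequence $f_j(k):=(CA^k z)_j=\sum_{i=1}^n d_i\,\lambda_i^k$ with $d_i:=c_i\,(Cv_i)_j$. Then $(f_j(0),\dots,f_j(n-1))^\top=Vd$, where $V$ is the $n\times n$ Vandermonde matrix of the distinct numbers $\lambda_1,\dots,\lambda_n$ and is therefore invertible; since $d_{i_0}=c_{i_0}(Cv_{i_0})_j\neq 0$ we get $d\neq 0$, hence $f_j(k)\neq 0$ for some $k\in\{0,\dots,n-1\}\subseteq\{0,\dots,T-1\}$ (this is where $T\ge n$ is used), i.e.\ $j\in\textsf{supp}(CA^k z)$.

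The main obstacle — really the only step beyond bookkeeping — is the assertion that an exponential sum $\sum_i d_i\lambda_i^k$ with distinct positive bases cannot vanish at the $n$ points $k=0,\dots,n-1$ unless every $d_i=0$; this is exactly the invertibility of the Vandermonde matrix, and it is where both the distinctness of the eigenvalues and the bound $T\ge n$ enter. Positivity of the $\lambda_i$ is used only to keep them real and distinct and, in $(i)\Rightarrow(ii)$, to guarantee $\lambda_i^k\neq 0$ so that multiplication by $\lambda_i^k$ preserves supports; distinctness together with non-vanishing would in fact suffice.
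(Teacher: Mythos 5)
Your proof is correct, and it follows essentially the same route as the source the paper defers to (the paper itself only writes ``refer to the proof in [Fawzi et al.]''): specialize to eigenvectors for $(i)\Rightarrow(ii)$, and for $(ii)\Rightarrow(i)$ decompose $z$ in the eigenbasis and invoke nonsingularity of the Vandermonde matrix in the distinct $\lambda_i$'s to show that a coordinate supported on $Cv_{i_0}$ cannot vanish at all of $k=0,\dots,n-1$. This is also the same core tool (Vandermonde/generalized-Vandermonde nonsingularity) that the paper's own appendix uses to prove Theorem~1, so no further comparison is needed.
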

\begin{proof} Refer to the proof in \cite{Fawzi2014}.
\end{proof}
\noindent
The significance of this lemma is that in order to check whether a decoder can guarantee accurate decoding of $q$ errors when the attacked nodes are fixed, one no longer needs to check satisfiability of condition $(i)$ which is stated for all $z \in \mathbb{R}^n \backslash \{0\}$ and hard to check, instead, one can simply check condition $(ii)$ for the eigenvectors of $A$ which is much simpler. Next, we derive a similar result for our decoder for when the attacked nodes can change with time.

\begin{thm} Let $A \in \mathbb{R}^{n\times n}, C \in \mathbb{R}^{p\times n}$. Assume that $C$ is full rank, $(A,C)$ is observable and $A$ has $n$ distinct positive eigenvalues such that $0 < \lambda_1 < \lambda_2 < \cdots < \lambda_n$. 
Define:
\begin{itemize}
\item
$s_i \triangleq \lvert \textsf{supp} (Cv_i) \vert$, where $v_i$ is an eigenvector of $A$, 
\item
$\mathcal{S} \triangleq \{ s_1, s_2, \cdots, s_n \}$,
\item
For every $m \in \{2, \ldots, n\}$, let $\mathcal{S}_m$ be any subset of $\mathcal{S}$ with $m$ elements, define $T_{\mathcal{S}_m} \triangleq \frac {  (m-2) \cdot p + \min \mathcal{S}_m } {\max \mathcal{S}_m - 2q }$.
Then $T_m$ is such that $T_m > T_{\mathcal{S}_m}$ for all subsets $\mathcal{S}_m$, i.e. all subsets of $m$ elements from the set $\mathcal{S}$.
\end{itemize}
Choose $T$ such that  $T \ge \max \{ T_2, \cdots, T_n \}$.
Then, the following are equivalent:
\begin{equation}
\begin{aligned} 
 (i)  &~\forall v_i \in \mathbb{R}^n \text{ where } Av_i =\lambda_i v_i, ~ \lvert \textsf{supp}(Cv_i) \rvert > 2q  \\
  (ii)  &~\forall v_i \in \mathbb{R}^n \text{ where } Av_i =\lambda_i v_i,  ~\lvert \textsf{supp} (\Phi v_i) \rvert > 2q \cdot T  \\
  (iii) &~  \forall z \in \mathbb{R}^n\backslash \{0 \}, \lvert \textsf{supp} (\Phi z) \rvert > 2 q \cdot T \\&~~~~~~~~ \text{(i.e. condition (ii) in Equation (\ref{eq:connection}))} \nonumber 
\label{eq:new_condition}
\end{aligned}
\end{equation}
\end{thm}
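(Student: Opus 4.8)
The plan is to prove the cycle $(iii)\Rightarrow(ii)\Rightarrow(i)\Rightarrow(iii)$, with essentially all of the difficulty concentrated in the last implication. The equivalence $(i)\Leftrightarrow(ii)$ is a one-line observation: since each $\lambda_i>0$, we have $\textsf{supp}(CA^tv_i)=\textsf{supp}(\lambda_i^tCv_i)=\textsf{supp}(Cv_i)$ for all $t$, so $\lvert\textsf{supp}(\Phi v_i)\rvert=\sum_{t=0}^{T-1}\lvert\textsf{supp}(CA^tv_i)\rvert=T\,s_i$, and hence $\lvert\textsf{supp}(\Phi v_i)\rvert>2qT$ is literally the statement $s_i>2q$. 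The implication $(iii)\Rightarrow(ii)$ is trivial, because each eigenvector is a nonzero vector in $\mathbb{R}^n$, so $(ii)$ is merely the restriction of $(iii)$ to eigenvectors.

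The substance is $(i)\Rightarrow(iii)$. Since $A$ has $n$ distinct eigenvalues, $\{v_1,\dots,v_n\}$ is a basis, so any $z\in\mathbb{R}^n\setminus\{0\}$ can be written $z=\sum_{i\in I}c_iv_i$ with $c_i\neq0$ for $i\in I$, where $m:=\lvert I\rvert\ge1$. If $m=1$, then $z$ is a scalar multiple of some $v_i$ and $\lvert\textsf{supp}(\Phi z)\rvert=T\,s_i>2qT$ directly from $(i)$. For $m\ge2$, consider for each sensor index $j\in\{1,\dots,p\}$ the scalar sequence $f_j(t):=(CA^tz)_j=\sum_{i\in I}c_i(Cv_i)_j\,\lambda_i^t$, a real combination of powers of the $m$ distinct positive numbers $\{\lambda_i\}_{i\in I}$. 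The key ingredient is a lemma on zeros of exponential sums: a nontrivial real combination $\sum_{i=1}^{k}a_i\mu_i^t$ of powers of $k$ distinct positive reals has at most $k-1$ real (hence integer) zeros; this is proved by a standard induction on $k$ using Rolle's theorem (divide by $\mu_1^t$, differentiate, and induct), or equivalently by nonsingularity of generalized Vandermonde matrices. Two consequences follow. First, because the $\lambda_i$ are distinct and the $c_i$ are nonzero, $f_j\equiv0$ iff $(Cv_i)_j=0$ for every $i\in I$, i.e. iff $j\notin U:=\bigcup_{i\in I}\textsf{supp}(Cv_i)$. Second, for $j\in U$ the sequence $f_j$ has exactly $k_j:=\lvert\{i\in I:(Cv_i)_j\neq0\}\rvert\le m$ nonzero terms, hence at most $k_j-1$ zeros, so $f_j(t)\neq0$ for at least $T+1-k_j$ of the times $t\in\{0,\dots,T-1\}$.

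Now regroup the support count of $\Phi z$ by sensor rather than by time:
\begin{equation}
\lvert\textsf{supp}(\Phi z)\rvert=\sum_{j\in U}\lvert\{t:f_j(t)\neq0\}\rvert\ \ge\ \sum_{j\in U}(T+1-k_j)=u(T+1)-\sum_{i\in I}s_i,\nonumber
\end{equation}
where $u:=\lvert U\rvert$ and I used $\sum_{j\in U}k_j=\sum_{i\in I}\lvert\textsf{supp}(Cv_i)\rvert=\sum_{i\in I}s_i$. Plugging in the crude bounds $u\ge\max_{i\in I}s_i=:M$ and $\sum_{i\in I}s_i\le\min_{i\in I}s_i+M+(m-2)p$ (sort the $m$ values $s_i$: one equals the minimum, one equals $M$, and each of the remaining $m-2$ is at most $p$) gives $\lvert\textsf{supp}(\Phi z)\rvert\ge MT-(m-2)p-\min_{i\in I}s_i$. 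Finally, the hypothesis $T\ge\max\{T_2,\dots,T_n\}$ with $T_m>T_{\mathcal{S}_m}=\frac{(m-2)p+\min\mathcal{S}_m}{\max\mathcal{S}_m-2q}$, applied to the subset $\mathcal{S}_m=\{s_i:i\in I\}$ of $\mathcal{S}$ (where $\max\mathcal{S}_m-2q=M-2q>0$ by $(i)$), makes the right-hand side strictly exceed $2qT$, which is precisely $(iii)$.

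The conceptual crux, and the step I expect to require the most care, is recognizing the zeros-of-exponential-sums lemma as the right tool and using it in its sharp per-row form: it is exactly what forbids the corruption pattern in an active sensor row from vanishing at more than $m-1$ of the $T$ time steps, which is the mechanism that converts a long enough horizon $T$ into a large support $\lvert\textsf{supp}(\Phi z)\rvert$. What remains is bookkeeping, but not entirely automatic: one must optimize over which eigenvectors are mixed into $z$ (hence the maximum over subsets $\mathcal{S}_m$) and over how many are mixed, $m\in\{2,\dots,n\}$ (hence the maximum over $m$), which is exactly the shape of the hypothesis on $T$. One minor technical point to dispatch is that $T>T_{\mathcal{S}_m}$ together with $M\le p$ already forces $T\ge m-1$, so the quantities $T+1-k_j$ are nonnegative and the count above remains valid even for rows attaining the maximal number of terms.
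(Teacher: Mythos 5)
Your proof is correct, and it rests on the same engine as the paper's: the fact that a nontrivial combination $\sum_{i}a_i\lambda_i^t$ of powers of $m$ distinct positive reals cannot vanish at $m$ distinct times, which is exactly the nonsingularity of the generalized Vandermonde matrix (Lemma~\ref{lem:gv}) that the paper invokes to show each row of $C\mathbf{v}\Lambda^k\alpha$ is cancelled at most $m-1$ times. Where you genuinely diverge is in the bookkeeping. The paper partitions the sensor rows by their intersection pattern ($L_1,L_2,L_{12}$ for $m=2$ in Lemma~\ref{lem:two_vec}; $L_1,\dots,L_{123}$ for $m=3$ in Lemma~\ref{lem:three_vec}) and bounds the cancellations region by region; for general $m$ it only states the bound for the full intersection $L_{12\cdots m}$ (Proposition~\ref{prop:m_vec}) and leaves the assembly into the $T_{\mathcal{S}_m}$ formula implicit. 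You instead count per sensor row: each row $j\in U$ contributes at least $T+1-k_j$ nonzero times, and the double-counting identity $\sum_{j\in U}k_j=\sum_{i\in I}s_i$ collapses the whole computation into $\lvert\textsf{supp}(\Phi z)\rvert\ge u(T+1)-\sum_{i\in I}s_i\ge MT-(m-2)p-\min_{i\in I}s_i$, reproducing the paper's thresholds for $m=2$ and $m=3$ exactly and giving a uniform, fully written-out proof for all $m$ at once. This is a cleaner and arguably more complete route to the same sufficient condition on $T$; the only thing it forgoes is the paper's explicit description of \emph{where} (in which intersection sets) cancellations can occur, which the paper also uses for intuition in its discussion section.
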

\noindent
In order to prove Theorem 1, we make use of Lemmas \ref{lem:two_vec}, \ref{lem:three_vec} and Proposition \ref{prop:m_vec} (see Appendix): 

\begin{proof} (Proof of Theorem 1)

First, it is simple to prove that $(i)$ and $(ii)$ are equivalent: $\lvert \textsf{supp} (\Phi v_i) \rvert = \sum_{k=0}^{T-1} \lvert \textsf{supp} (CA^k v_i ) \rvert = \sum_{k=0}^{T-1} \lvert  \textsf{supp} (\lambda_i^k C v_i )\rvert  = T\cdot  \lvert \textsf{supp} (C v_i) \rvert  $. 

Second, we want to show that $(ii)$ and $(iii)$ are equivalent. The direction $(iii) \implies (ii)$ is trivial, since $(ii)$ is a specific case of (iii) with $z = v_i$. The other direction is more complex. Note that $A$ is diagonalizable, therefore its eigenvectors form a basis for $\mathbb{R}^n$. Now consider the decomposition of $z $ in the eigenbasis of $A$, i.e. $z = \sum_{i=1}^n \alpha_i v_i$ with $\alpha_i \neq 0$ for at least one $i$. 
\begin{enumerate}
\item $m=1$: Suppose there exists $z \in \mathbb{R}^n \backslash \{ 0\}$ such that $ \lvert \textsf{supp} (\Phi z) \rvert \le 2 q \cdot T$. Without loss of generality, let $\alpha_i \neq 0$ and $\alpha_j= 0$ for all $j \neq i$, then, $2 q \cdot T \ge \lvert \textsf{supp} (\Phi z) \rvert = \lvert \textsf{supp} (\alpha_i \Phi v_i) \rvert = \lvert \textsf{supp} (\Phi v_i) \rvert $ for all $T$ (contradiction, $\because \forall v_i, \lvert \textsf{supp} (\Phi v_i) \rvert  > 2 q \cdot T$). 
\item $m=2$: By Lemma \ref{lem:two_vec}, if we choose $T>T_2$.
\item $m\ge 3$: By Lemma \ref{lem:three_vec} and Proposition \ref{prop:m_vec}, if we choose $T>T_m$ for each value of $m$, respectively.
\end{enumerate}
We need to choose $T$ to satisfy the worst case for any $m$ such that $n \ge m \ge 2$. Thus, if $T \ge \max \{ T_2, \cdots, T_n \}$, then $(ii)$ and $(iii)$ are also equivalent.
\end{proof}


\begin{prop}\label{prop:equivalent2}
Given $C$ is full rank, the closed-loop matrix $A~(=A_o+BG)$ has $n$ distinct positive eigenvalues, the open-loop pair $(A_o,B)$ is controllable, the closed-loop pair $(A,C)$ is observable and $T$ is chosen to satisfy Theorem 1. Then, the condition for secure estimation of $q$-errors when the set of attacked nodes is fixed ((i) in (\ref{eq:connection})) is the same as the condition for when the set of attacked nodes can change over time ((ii) in (\ref{eq:connection})), except the condition on $T$.
\end{prop}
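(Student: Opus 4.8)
The plan is to obtain Proposition~\ref{prop:equivalent2} as a corollary of Lemma~\ref{lem:distinct} and Theorem~1, bridged by their common characterization in terms of the eigenvectors of $A$. First I would check that the hypotheses assumed here --- $C$ full rank, $A$ with $n$ distinct positive eigenvalues, $(A,C)$ observable --- are exactly those needed to invoke both results; observability of $(A,C)$ together with a horizon at least $n$ also secures $\mathrm{rank}(\Phi)=n$, so the reformulation of Section~\ref{sec:err_corr} is well posed. The controllability of $(A_o,B)$ enters only through the observation that closed-loop controllability is inherited from the open loop, so that one may place the $n$ distinct positive eigenvalues of $A=A_o+BG$ by choice of $G$; it plays no role in the equivalence of the two support conditions, and I would say this explicitly.

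Next I would run the chain of equivalences. Applying Lemma~\ref{lem:distinct} with a horizon $T'\ge n$: condition $(i)$ of~(\ref{eq:connection}), namely $\lvert \textsf{supp}(Cz)\cup\cdots\cup\textsf{supp}(CA^{T'-1}z)\rvert>2q$ for all $z\in\mathbb{R}^n\setminus\{0\}$, is equivalent to $\lvert\textsf{supp}(Cv_i)\rvert>2q$ for every eigenvector $v_i$ of $A$. This is statement $(i)$ of Theorem~1. Then, since by hypothesis $T$ is chosen so that $T\ge\max\{T_2,\dots,T_n\}$ as required there, Theorem~1 gives $(i)\Leftrightarrow(iii)$, i.e. the eigenvector condition is equivalent to $\lvert\textsf{supp}(\Phi z)\rvert>2q\cdot T$ for all $z\in\mathbb{R}^n\setminus\{0\}$; and because $\lvert\textsf{supp}(\Phi z)\rvert=\sum_{i=0}^{T-1}\lvert\textsf{supp}(CA^iz)\rvert$, this is exactly condition $(ii)$ of~(\ref{eq:connection}). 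Composing the two equivalences yields $(i)\Leftrightarrow(ii)$, which is the claim.

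Finally I would make the ``$T$'' caveat precise, since that is the only delicate point. The horizon in the fixed-node condition $(i)$ need only satisfy $T'\ge n$ (for Lemma~\ref{lem:distinct}), whereas the horizon in the time-varying condition $(ii)$ must satisfy the stronger, eigenvalue-dependent bound $T\ge\max\{T_2,\dots,T_n\}$ (for Theorem~1), and these two horizons are \emph{a priori} distinct. The reason the equivalence nevertheless goes through is that the connecting statement --- the eigenvector inequality $\lvert\textsf{supp}(Cv_i)\rvert>2q$ --- makes no reference to any horizon, so one may pass from one $T$-regime to the other through it. I do not anticipate a substantive obstacle here: the argument is purely a composition of two already-established equivalences, and the only care required is verifying that all stated hypotheses feed correctly into Lemma~\ref{lem:distinct} and Theorem~1, and tracking which value of $T$ is in force at each step.
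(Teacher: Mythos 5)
Your proposal is correct and follows essentially the same route as the paper: the paper's own (one-line) proof likewise uses controllability of $(A_o,B)$ only to place the $n$ distinct positive closed-loop poles, and then composes Lemma~\ref{lem:distinct} (fixed-node condition $\Leftrightarrow$ eigenvector condition) with Theorem~1 (eigenvector condition $\Leftrightarrow$ changing-node condition). Your write-up simply makes that chain, and the distinction between the two horizons, more explicit than the paper does.
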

\begin{proof}
Since the pair $(A_o,B)$ is controllable, there exists a feedback matrix $G$ such that the eigenvalues of the closed-loop matrix $A$, i.e., $\lambda_1, ..., \lambda_n$ can be arbitrarily located on the complex plane. Then Proposition \ref{prop:equivalent2} directly follows from Proposition \ref{prop:equivalent}, Lemmas \ref{lem:distinct} and Theorem 1 
\end{proof}

Theorem 1 and Proposition \ref{prop:equivalent2} state that if the feedback system and the secure decoder are designed such that the conditions in Proposition \ref{prop:equivalent2} are satisfied, then we can guarantee accurate correction of $q$ errors using our proposed secure decoder for attacked nodes that can change over time, by checking the following very simple condition:
\begin{equation}
\forall v_i \in \mathbb{R}^n \text{ where } Av_i =\lambda_i v_i, ~ \lvert \textsf{supp}(Cv_i) \rvert > 2q.  \nonumber
\end{equation}
And interestingly, this is the exact same condition that one should check if one is designing the decoder from \cite{Fawzi2014} for fixed attack nodes. In other words, it is equally easy to check satisfiability of the sufficient condition for $q$-error-correction for both types of decoders.


\subsection{Discussion on Sufficient Condition of $T$}

Unsurprisingly, the condition on $T$ from Theorem 1, is different from that for a decoder designed for fixed attack nodes ($T\geq n$). In \cite{Fawzi2014}, since the set of attack nodes is fixed, one can leverage the property of fixed attacked nodes (i.e., the number of nonzero rows in (\ref{eq:opt_decoder})). However, in our setting, since the set of attack nodes can change over time, we cannot leverage this property and thus, we need more time steps $T$.
In general, it is difficult to see what the value of $T$ is, from the formula in Theorem 1. However it is reasonable to assume that one would design the feedback matrix $G$ and matrix $C$ to maximize the number of errors that can be corrected, i.e. $q = \max q = \lceil p/2 - 1 \rceil$ (proved in Section \ref{sec:max_q}). To achieve this, we must design $G$ and $C$ such that $s_i = \lvert \textsf{supp} (C v_i) \rvert = p$ for all eigenvectors of $A$. In other words, $\operatorname{max} \mathcal{S}_m = \operatorname{min} \mathcal{S}_m = p$ for all $m$ and for all subsets $\mathcal{S}_m$. 
In this case, it is easy to see that according Theorem 1, for any $p$ that is an even number, we must choose: $T\geq T_n > (n-1) \cdot p/2$.
In most cases, this $T$ is larger than $n$ (the value of $T$ required for the decoder for fixed attacked nodes).

At the first look, this may seem like an unsatisfying result, however there are two important points to note. First, a larger value of $T$ merely translates to a longer initial delay from time $t = 0$ to $t=T-1$ when the decoder collects enough ($T$) measurements. From time $t=T-1$ onwards, the decoding is in real time and takes place at every time step $t$ as the decoder uses a ``sliding window'' of observations. Second, extensive simulation results in this paper show that a value of $T = n$ is often sufficient for our proposed decoder to achieve good estimation, i.e. to perfectly recover the state under attacks where the attacked nodes change over time. 

This is because we consider the worst case (conservative) scenario in Theorem 1. 
In the proof, we consider the case where $A$ has $n$ distinct positive eigenvalues. What if $A$ has complex eigenvalues? For instance, assume that $A \in \mathbb{R}^{3\times 3}$ $(n=3)$ and it has one pair of complex conjugate eigenvalues and one real eigenvalue, i.e., $\lambda_1, \lambda_2 (= \bar \lambda_1), \lambda_3$ where $\lambda_1, \lambda_2 \in \mathbb{C}$, $\lambda_3 \in \mathbb{R}$ and $\bar \cdot$ represents the complex conjugate. We denote $v_1 = x+ i y$, $v_2 = \bar v_1 = x - i y$ and $v_3 = w$ where $x$, $y$, $w  \in \mathbb{R}^n $ are linearly independent. Any $z \in \mathbb{R}^n$ can be represented by a linear combination of $n$ independent vectors in $\mathbb{R}^n$, i.e., $z = \alpha v_1 + \bar \alpha  v_2 + \beta w  = 2 Re( \alpha v_1) + \beta w =  \alpha_1 x + \alpha _2 y + \beta w$ where $\alpha \in \mathbb{C}$, $\beta \in \mathbb{R}$ and $\alpha_1 = Re(\alpha)\in \mathbb{R}$ and $\alpha_2 = - Im(\alpha) \in \mathbb{R}$. Therefore, the same results for real eigenvalues applies. In other words, if $ \lvert \textsf{supp}( C x) \rvert = \lvert \textsf{supp}( C y) \rvert = \lvert \textsf{supp}( C w) \rvert = p$ and $T > \frac { \sum_{k=0}^{T-1} s_{r,123}^k} {p - 2q}$, then $\lvert \textsf{supp} (\Phi z) \rvert > 2 q \cdot T$ where $s_{r,123}^k $ represents the number of cancelled support of linear combinations of $x,y$ and $w$ at time step $k$.


\subsection{Number of Correctable Errors}\label{sec:max_q}

Given that the set of attacked nodes can change over time and $e^{(t)}$ satisfies $\lvert \textsf{supp} (e^{(t)}) \rvert \le q$ for all $t$, we prove in Proposition \ref{prop:maximum} (see below) that the maximum number of correctable errors (as defined in Definition \ref{def:num_err_change}) by our decoder is $\lceil p/2-1 \rceil$, where $p$ is the number of measurements. This is in fact the same as the maximum number of correctable errors for the decoder proposed in \cite{Fawzi2014} which is for fixed attacked nodes.
This is a pleasing result, because it demonstrates that with our proposed decoder, we can relax the assumption of fixed attacked nodes and protect the system against more general attacks, without compromising the maximum number of correctable errors. 


\begin{prop}\label{prop:maximum} 
Let $A_0 \in \mathbb{R}^{n \times n}$, $B \in \mathbb{R}^{n \times m}$ and $C \in \mathbb{R}^{p \times n}$ and assume that the pair ($A_0$, $B$) is controllable, $C$ is full rank and each row of $C$ is not identically zero. Then there exists a finite set $F \subset \mathbb{C}$ such that for any choice of $n$ numbers $\lambda_1, \cdots, \lambda_n \in \mathbb{C} \backslash F$ such that $0<\lambda_1< \cdots < \lambda_n$, there exists $G \in \mathbb{R}^{m \times n}$ such that:
\begin{itemize}
\item
The eigenvalues of the closed-loop matrix $A~(= A_0+BG)$ are $\lambda_1, \cdots, \lambda_n$.
\item
If the pair ($A, C$) is observable, then the number of correctable errors for the pair ($A, C$) is maximal after $T= \max\{n, T^*\}$ time steps and is equal to $\lceil p/2-1 \rceil$, where $T^*$ is the value of $T$ from Theorem 1. 
\end{itemize}
\end{prop}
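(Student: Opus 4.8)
\emph{Proof proposal.} The statement splits into a converse bound --- for \emph{any} $C \in \mathbb{R}^{p\times n}$ and any closed loop, at most $\lceil p/2-1\rceil$ errors are correctable --- and an achievability claim --- for a generic placement of the closed-loop spectrum the bound is attained. The plan is to prove the converse by an explicit indistinguishability argument, and the achievability by a genericity argument on the closed-loop eigenvectors, fed into Proposition~\ref{prop:equivalent2} and Theorem~1.

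\emph{Converse.} Fix any integer $q$ with $2q \ge p$ and any horizon $T$. I would show that no decoder can correct $q$ errors by exhibiting two distinct initial states yielding identical corrupted observations. Pick any nonzero $z \in \mathbb{R}^n$. For each $t \in \{0,\dots,T-1\}$, partition the support of $CA^t z$ into two disjoint sets of sizes at most $\lceil p/2\rceil \le q$, and write $CA^t z = a^{(t)} - b^{(t)}$, where $a^{(t)}$ and $b^{(t)}$ are the restrictions of $CA^t z$ and $-CA^t z$ to the two pieces; both have support at most $q$. Then the measurement sequence produced by $(x^{(0)} = \xi,\ e^{(t)} = b^{(t)})$ coincides with the one produced by $(x^{(0)} = \xi - z,\ e^{(t)} = a^{(t)})$ for every $t$, since $CA^t(\xi - z) + a^{(t)} = CA^t\xi - (a^{(t)} - b^{(t)}) + a^{(t)} = CA^t\xi + b^{(t)}$. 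As $z \ne 0$, the two initial states differ, so by Definition~\ref{def:num_err_change} the decoder cannot be correct; hence the number of correctable errors is at most $\lceil p/2\rceil - 1 = \lceil p/2 - 1\rceil$. This half uses nothing about the feedback design.

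\emph{Achievability.} By Proposition~\ref{prop:equivalent2}, once $A = A_0 + BG$ has $n$ distinct positive eigenvalues and $(A,C)$ is observable, it suffices to force $\lvert \textsf{supp}(Cv_i)\rvert = p$ for every eigenvector $v_i$ of $A$: then, with $q = \lceil p/2-1\rceil$ we have $p > 2q$, so condition $(i)$ of Theorem~1 holds, and taking $T = \max\{n, T^*\}$ with $T^*$ the bound supplied by Theorem~1 yields condition $(iii)$, i.e.\ accurate correction of $q$ errors. To realize the required eigenstructure I would invoke controllability of $(A_0,B)$: for any $\lambda$ outside the spectrum of $A_0$, every closed-loop eigenvector associated to $\lambda$ is proportional to $v(\lambda) := (\lambda I - A_0)^{-1}Bw$ for a fixed $w$ ($w$ a nonzero scalar when $B$ has one column, a generic vector otherwise), and for distinct $\lambda_i$ the vectors $v(\lambda_i)$ are linearly independent --- precisely the condition guaranteeing a unique real $G$ with eigenvalues $\lambda_i$ and eigenvectors $v(\lambda_i)$. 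For each row $C_j$ of $C$, $C_j v(\lambda) = \dfrac{C_j\,\mathrm{adj}(\lambda I - A_0)\,Bw}{\det(\lambda I - A_0)}$ is a rational function whose numerator is a polynomial in $\lambda$; this numerator is not identically zero, because $C_j A_0^k Bw = 0$ for all $k$ would, via Cayley--Hamilton and controllability (the identity $C_j[\,B\ A_0B\ \cdots\ A_0^{n-1}B\,] = 0$ forces $C_j = 0$), contradict the hypothesis that no row of $C$ is identically zero. Hence each numerator has finitely many roots. Let $F$ be the union of the spectrum of $A_0$ with the finite root sets of these $p$ numerators. For any distinct $0 < \lambda_1 < \cdots < \lambda_n$ in $\mathbb{C}\setminus F$, the associated $G$ produces eigenvectors $v_i = v(\lambda_i)$ with $C_j v_i \ne 0$ for all $i,j$, i.e.\ $\lvert \textsf{supp}(Cv_i)\rvert = p$; in particular $Cv_i \ne 0$ for all $i$, so $(A,C)$ is automatically observable. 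Combining with the converse shows the maximum is exactly $\lceil p/2-1\rceil$, attained after $T = \max\{n, T^*\}$ steps.

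\emph{Main obstacle.} The delicate point is the achievability genericity step: making the dependence of the closed-loop eigenvectors on the assigned eigenvalues explicit via the formula $v(\lambda) = (\lambda I - A_0)^{-1}Bw$, checking that the scalar functions $C_j v(\lambda)$ are not identically zero --- this is exactly where ``each row of $C$ nonzero'' and controllability of $(A_0,B)$ enter --- and, in the multi-input case, choosing the fixed direction $w$ once and for all so that the $v(\lambda_i)$ stay linearly independent and $G$ exists over $\mathbb{R}$. One should also verify that the two constraints on $T$ are mutually compatible: $T \ge n$ is required by Lemma~\ref{lem:distinct} (hence by Proposition~\ref{prop:equivalent2}), while $T \ge T^*$ is required by Theorem~1, and $T = \max\{n, T^*\}$ meets both.
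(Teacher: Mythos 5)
Your proposal is correct, but it takes a substantially more self-contained route than the paper. The paper's own proof is essentially a two-line citation: it invokes the proof of Proposition~4 in \cite{Fawzi2014} for the existence of the finite set $F$ and of a feedback $G$ placing distinct positive poles whose eigenvectors $v_i$ satisfy $\lvert \textsf{supp}(Cv_i)\rvert = p$, and then applies Proposition~\ref{prop:equivalent2} to conclude $q = \lceil p/2-1\rceil$. You instead (a) re-derive the cited eigenstructure result explicitly via $v(\lambda) = (\lambda I - A_0)^{-1}Bw$, identifying $F$ as the root set of the rational functions $C_j v(\lambda)$ and using controllability plus ``no zero row of $C$'' to show these are not identically zero, and (b) add an explicit converse: an indistinguishability argument splitting $\textsf{supp}(CA^t z)$ into two halves of size at most $\lceil p/2\rceil$ to show that no decoder whatsoever can correct $q \ge \lceil p/2\rceil$ errors. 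Part (b) is a genuine improvement in rigor --- the paper's route only shows that its \emph{sufficient} condition fails for larger $q$, whereas ``maximal'' requires ruling out all decoders, which your two-scenario construction does cleanly and matches how \cite{Fawzi2014} establishes the bound. The one soft spot is the multi-input case of (a): with a single fixed direction $w$, linear independence of the $v(\lambda_i)$ requires $(A_0, Bw)$ to be controllable, which a generic $w$ gives only when $A_0$ is cyclic; in general one needs a preliminary feedback (Heymann's lemma) first. You flag this yourself, and since the paper outsources exactly this construction to \cite{Fawzi2014}, it does not undermine the proposal, but it is the step you would need to nail down to make the argument fully independent of the citation.
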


\begin{proof}
The proof for Proposition 4 in \cite{Fawzi2014} shows that if the chosen poles $\lambda_1, \cdots, \lambda_n$ are distinct, positive and do not fall in some finite set $F$, then there is a choice of $G$ such that the eigenvalues of $A~(=A_0+B)$ are exactly the $\lambda_1, \cdots, \lambda_n$, and the corresponding eigenvectors $v_i$ are such that $\lvert \textsf{supp} (C v_i) \rvert = p$. Thus, by Proposition \ref{prop:equivalent2}, the number of correctable errors for $(A,C)$ is $\lceil p/2-1 \rceil$.
\end{proof}



In addition, recall that $E_{q,T}$ consists of vertically stacked error vectors from $e^{(0)}, \cdots, e^{(T-1)}$, and observe that our proofs for accurate decoding are independent of how the individual error (nonzero) terms are distributed in the vector $E_{q,T}$. Thus, if we remove the assumption: $\lvert \textsf{supp} (e^{(t)}) \rvert \le q$ for all $t$, and allow $e^{(t)}$ to appear in an arbitrary fashion, e.g. $\lvert \textsf{supp} (e^{(0)}) \rvert = 0$ and $\lvert \textsf{supp} (e^{(1)}) \rvert = 2q$, as long as $\sum_{t=0}^{T-1} \lvert \textsf{supp} (e^{(t)}) \rvert \leq q\cdot T$, then our $q$-error-correcting decoder can still recover the true states. In other words, our proposed decoder can protect the system against more general attacks, where the number of attacked nodes is not necessarily less than or equal to $q$ at every time.

\begin{figure*}[!t]
\center
\includegraphics[width=0.85\textwidth]{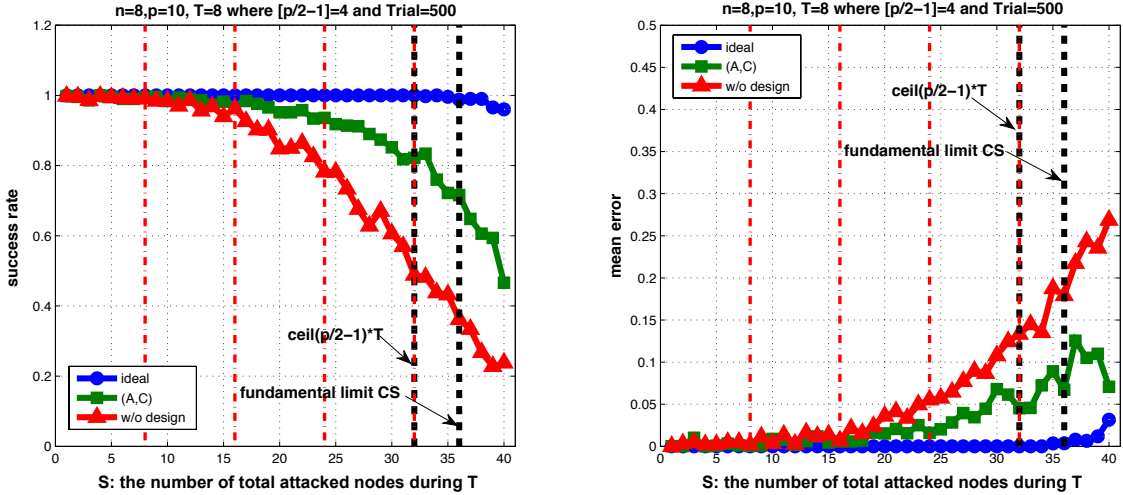}
\caption{Success rate and mean error of $l_1$ decoder on different systems (ideal coding matrix, designed state feedback and poorly designed system with $n=8$, $p=10$ and $T=n$, where black dot lines show the fundamental limit for dynamical systems and ideal coding matrix case respectively. We see that as the number of attacked nodes increase, success rate decreases. Also, by designing state feedback gain properly, we improve success rate and decrease mean error. }
\label{fig:ex_n8p10}
\end{figure*}

\section{Optimal Decoder Design}\label{sec:design}
 
In the classical error correction problem, to ensure accurate decoding, the coding matrix must satisfy the RIP conditions  \cite{Candes_Tao}, which are extremely difficult to check in general. In practice, Theorem 1.4 from\cite{Candes_Tao} is almost always used to design a coding matrix {\it a priori}. This theorem states that a coding matrix whose entries are sampled from independent and identical distributions satisfies the RIP condition with overwhelming probability. 
In secure estimation, however, it is impossible to choose such a coding matrix {\it a priori} because it is the observability matrix $\Phi$, which is structurally constrained: as shown in (\ref{eq:decoder_Phi}), $\Phi$ consists of $CA^{i}$'s where $i=\{0, \cdots, T-1\}$. In this Section, we use Lemma \ref{lem:distinct}, the results from Proposition 4 in \cite{Fawzi2014} and state feedback to design a matrix $\Phi$ for accurate decoding.


\subsection{System and Decoder Design}\label{sec:decoder_design}

Since conditions $(i)$ and $(iii)$ in Theorem 1 are equivalent, condition $(i)$ can be used to design a state feedback controller such that the closed system can achieve accurate decoding. Therefore, given a controllable open-loop pair $(A_0,B)$, design $C$ and choose an adequate feedback control law $u^{(t)} = G x^{(t)}$ and construct a secure decoder such that:
\begin{enumerate}
	\item Each row of $C$ is not identically zero, and $C$ is full rank;
	\item The closed-loop matrix $A~(=A_0+BG)$ has $n$ distinct positive eigenvalues: $0< \lambda_1 < \lambda_2 < \cdots < \lambda _n$;
	\item $(A,C)$ is observable;
	\item The length of the sliding window of measurements $T$ of the decoder satisfies Theorem 1\footnote{We found that much smaller $T$'s are often sufficient for good secure estimation performance, i.e. to perfectly recover the attack signals. In all simulations in this paper, $T=n$ is used, where $n$ is the number of states.};
	\item Maximize $q$ subject to: $\forall v_i \in \mathbb{R}^n$ where $Av_i = \lambda v_i$, $\lvert \textsf{supp} (C v_i)  \rvert > 2q$.
\end{enumerate}
Without loss of generality, the first condition holds. For example, if there exists a zero row in $C$, we can simply remove that row from $C$ without changing the system's behavior. Conditions 2, 3 and 4 are required for equivalence in Theorem 1. The last condition is needed for accurate decoding and for maximizing the number of correctable attacks. From Proposition \ref{prop:maximum} the maximum number of correctable errors can be achieved when $\lvert \textsf{supp} (C v_i)  \rvert = p$ (i.e., the number of measurements) for all eigenvectors of $A$. 

Conditions 2, 3 and 5 depend on the feedback controller. So how do we choose a controller that achieves good performance in both control and secure estimation? Below, we describe an approach that we have taken in all simulations in this paper, and has proved to work well. This is by no means the only method. First, we design a controller that achieves good control, for example, Linear Quadratic Regulators (LQR), which are optimal with respect to a certain quadratic cost function. However, these controllers may not have good secure estimation properties, meaning the value of $q$ that satisfies $\lvert \textsf{supp} (C v_i) \rvert > q$ for all eigenvectors of $A$ may be small, i.e., the resulting decoder can only correct few errors. It is often easy to increase the value of $q$ and make the decoder more resilient to attacks by slightly perturbing the closed-loop poles from those resulting from the LQR controller, such as placing the poles closer to the origin, and making the poles more spread out amongst themselves. We chose to keep the perturbations small as to not lose too much control performance.  
Although this is a heuristic method, it is relatively easy to carry out in order to satisfy the above conditions; whereas in the classical error correction method \cite{Candes_Tao}, checking whether a coding matrix satisfies RIP is extremely difficult.

To summarize, we start from some optimal controller which may not result in a good decoder, then we perturb the closed-loop poles slightly to improve the resulting decoder's secure estimation capability. Therefore there is a trade-off between a system's control and secure estimation performances, and the feedback controller can be designed to achieve a desired trade-off between them.

\begin{figure}
\center
\includegraphics[width=0.45\textwidth]{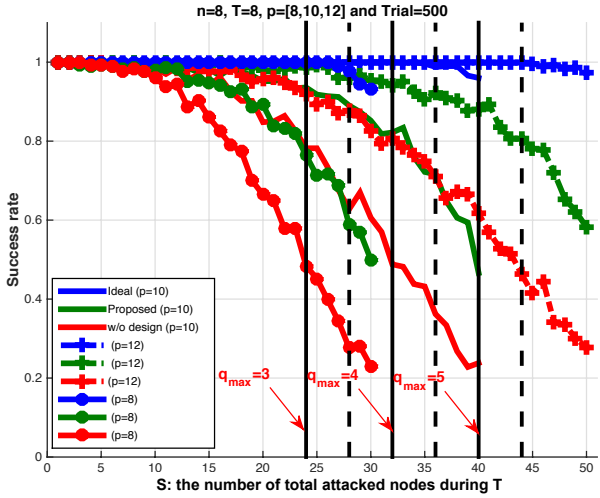}
\caption{Success rate and mean error of $l_1$ decoder on three different systems (ideal coding matrix, designed state feedback and poorly designed system with $n=8$ and $T=n$) with different $p=[8,10,12]$. Black solid lines show the fundamental limit for dynamical systems and black dashed lines show the fundamental limit for the ideal coding matrix case. We see that as the number of attacked nodes increases, the success rate decreases. Also, by designing the state feedback gain properly, we improve success rate and decrease mean error. }
\label{fig:ex_n8_overall}
\end{figure}


\subsection{Practical Performance}\label{sec:prac_perf}
In this Section, we show the performance of our proposed decoder using an arbitrary system with $n=8$, $p=10$ and $T=n$, where $A$ is chosen arbitrarily (i.e., the entries of $A$ are chosen as independent and identically distributed random variables) and $C$ is chosen such that every row of $C$ has only one nonzero component (i.e., each row of $C$ is not identically zero) and is full rank. 
For each value of $S$ (i.e. number of total attacked nodes during $T$ steps), we test the decoder on 500 independent trials. For each trial, both the system matrices and initial conditions are re-generated. The initial conditions $x^{(0)}$ were randomly generated from the standard Gaussian distribution and the set of attacked nodes are chosen at random and can change over time. Since we increase $S$ in steps of $1$ as shown in Figure \ref{fig:ex_n8p10}, 
we first distribute $q= \lfloor S/T \rfloor$ attacks arbitrary for each time step $t$ and randomly distribute the remaining $(S-q\cdot T)$ attacks amongst $e^{(0)},\cdots, e^{(T-1)}$. For example, in Figure \ref{fig:ex_n8p10}, for $S=20$, we have at least 2 attacks for each $e^{(t)}$ where $t=\{0,...,7\}$ and 
distribute the remaining $4$ attacks arbitrary. 

Figures \ref{fig:ex_n8p10} and \ref{fig:ex_n8_overall} show the performance of our proposed decoder on three different systems: (1) an ideal random coding matrix (i.e., with i.i.d. Gaussian entries), (2) a LTI system with a state feedback controller designed using the method in Section \ref{sec:decoder_design} and (3) a LTI system with a poorly designed feedback controller. 
The performance metric used is the \textit{success rate}, which is defined as the fraction of the total 500 trials that our decoder is able to perfectly recover the attack signals.
System 1 acts as the baseline for the comparison because a random coding matrix has good RIP and represents the best achievable decoding performance. However it is not a realistic dynamical system. For a realistic LTI system, the coding matrix $\Phi$ is constructed by stacking $CA^{t}$'s as in (\ref{eq:decoder_Phi}) where $t=\{ 0, ..., T-1\}$. This structural constraint can reduce the restricted isometry constant (i.e., a measure of how good the RIP is, larger restricted isometry constants correspond to better RIP), and indeed, the success rates for system 3 are lower than those for system 1. Figure 1 also shows that by designing the feedback controller, we can recover some of the lost RIP due to the structure in the system dynamics, and improve the success rate (system 2).
Fundamental limits of the $l_1$-optimization for both secure decoding and ideal coding are also shown in Figure \ref{fig:ex_n8p10}.

Figure \ref{fig:ex_n8_overall} shows that as $S$ increases, more measurements ($p$) are needed to correctly recover the state of the system. In practice, this can be done by fusing different types of measurements and sensors together. For example, consider a simplified UAV dynamics where $ x = \begin{bmatrix} p_x, p_y, p_z, v_x, v_y, v_z \end{bmatrix}^\top$ and $p_{(\cdot)}$'s and $v_{(\cdot)}$'s represent positions and velocities, respectively. The observation equations of the IMU and GPS are as follows:
\begin{equation}
C_{IMU} =\begin{bmatrix} {\bf 0}_{3\times 3} & {\bf I} \end{bmatrix}, C_{GPS} = \begin{bmatrix} {\bf I} & {\bf 0}_{3\times 3} \end{bmatrix} \nonumber
\end{equation}
By combining measurements from both IMU and GPS, we can increase the types of measurements ($p$), $C = \begin{bmatrix} C_{IMU} ; C_{GPS}\end{bmatrix} \in \mathbb{R}^{6 \times n}$.

Finally, we want to point out that $T=n=8$ in all these simulations, which is much smaller than that dictated by Theorem 1 ($T=35$).
Nevertheless, our decoder's success rates are quite high: in Figure \ref{fig:ex_n8p10}, it achieves success rates of more than 0.8 for system~2 for all attacks where the number of attacked nodes is below the maximum number of correctable errors, in Figure \ref{fig:ex_n8_overall}, these success rates increase to above 0.9 when 12 measurements are available. These results suggest that the sufficient condition on $T$ given in Theorem 1 is conservative, and can be relaxed in practice.

\begin{figure}
\center
\includegraphics[width=0.45\textwidth]{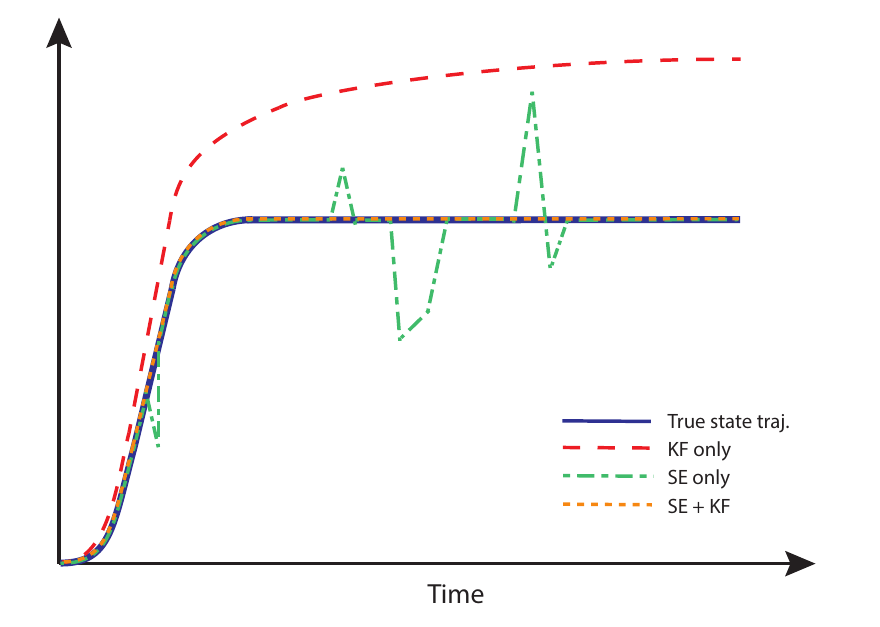}
\caption{Illustrative comparison of three schemes: KF only (KF), secure estimator only (SE), secure estimator with a KF (SE+KF). KF fails to estimate the true state as attack signal is non-Gaussian. SE correctly estimates the system state most of the time but has occasional large estimation errors. SE+KF tracks the true state trajectory perfectly.}
\label{fig:estimation}
\end{figure}


\subsection{Combination of Secure Estimation and Kalman Filter}\label{sec:estimation}
Consider the state estimation problem for the following LTI system under attack:
\begin{equation}
\begin{aligned}
x^{(t+1)} & = A x^{(t)} + B u^{(t)} + w^{(t)}\\
y^{(t)} & = C x^{(t)} + e^{(t)} + v^{(t)}
\end{aligned}
\end{equation}
where $x^{(t)}$, $y^{(t)}$, $u^{(t)}$ and $e^{(t)}$ are as defined in (\ref{eq:system_model_se});
$w$ and $v$ are zero mean i.i.d. Gaussian process noise and measurement noise, respectively. 

We can model the attack signal as noise and use a KF to estimate the states. More specifically, one would define a new measurement noise vector $\tilde{v}^{(t)} = e^{(t)} + v^{(t)}$, so that the measurement equation becomes $y^{(t)} = C x^{(t)} + \tilde{v}^{(t)}$. A KF can then estimate the states from the inputs $u^{(t)}$ and the corrupted measurements $y^{(t)}$ \cite{KwonACC}. However, KFs are derived using the assumption that both process and measurement noises are zero mean i.i.d. white Gaussian processes, but attack signals are usually erratic and may be poorly modeled by Gaussian processes \cite{KwonACC}. For example, in GPS spoofing attacks, attack signals are often structured to resemble normal GPS signals or can be genuine GPS signals captured elsewhere. 
When the system is subjected to attacks that are poorly modeled by Gaussian processes, a KF is expected to fail to recover the true states. Figure \ref{fig:estimation} gives an illustrative example where an attack signal that increases linearly with time is injected into the measurements of state $x_i$. The red dashed line shows a plausible estimated state trajectory from a KF.

On the other hand, our proposed secure decoder works for arbitrary and unbounded attacks.
Nevertheless, Figures \ref{fig:ex_n8p10} and \ref{fig:ex_n8_overall} show that when the number of attacked nodes are close to the theoretical maximum number of correctable errors, our decoder occasionally fails to perfectly recover the states.
The green dashed line in Figure \ref{fig:estimation} depicts a possible result from this decoder: the estimated state trajectory follows the true trajectory most of the time with occasional errors.

Therefore, to improve the performance, we propose to combine our secure decoder with a KF as follows:
\begin{algorithm}
\caption{Combined secure estimator with KF}
\label{al:se_kf}
\begin{algorithmic}[1]
\State Initialize the KF
\For{each $t$}
	\If{$t \geq T$}
		\State Estimate the attack signal at time $t$, $\hat e^{(t)}$, using secure estimator
	\Else
		\State Set $\hat e^{(t)} = 0$
	\EndIf
	\State Form a new measurement equation: $\tilde y^{(t)} =  C x^{(t)} + \tilde v^{(t)}$, where $\tilde y^{(t)} = y^{(t)} - \hat e^{(t)}$ and $ \tilde v^{(t)} = e ^{(t)} - \hat e^{(t)} + v^{(t)}$
	\State Apply standard KF using $u$ and $\tilde y$ 
\EndFor
\end{algorithmic}
\end{algorithm}

\noindent
The intuition is that the secure decoder acts as a pre-filter for the KF, so that $\tilde v^{(t)}$ is close to a zero mean i.i.d. Gaussian process even when the true attack signal $e^{(t)}$ is not. More specifically, the secure decoder usually perfectly recovers $e^{(t)}$, thus $e^{(t)} - \hat e^{(t)} = 0$ and $\tilde v^{(t)} = v^{(t)}$. What happens when the secure decoder fails? Equation (\ref{eq:sys_err_corr}) shows that the estimated state at time $t$, $\hat x^{(t)}$, 
does not directly depend on the estimated state at another time point $\hat x^{(\tau)}$ ($t \neq \tau$). As a result, when the secure decoder fails, its estimation error, $e^{(t)} - \hat e^{(t)}$, appears to be quite random. Putting these together: $\tilde v^{(t)} = e^{(t)} - \hat e^{(t)} + v^{(t)}$ is closer to a zero mean white Gaussian process than the original attack signal $e^{(t)}$, and this improves the KF's performance. 

Finally, the \textit{if} statement in Algorithm 1 ensures that the secure estimator always has access to $T$ past measurements, as required by Theorem 1.
A more realistic example illustrating these behaviors is shown in Figure \ref{fig:ex_uav_remote}.

\section{Numerical Examples}\label{sec:examples}

On February 15, 2015, the Federal Aviation Administration proposed to allow routine use of certain small, non-recreational UAVs in today's aviation system \cite{faa}. Thus in the near future, we may see thousands of UAVs such as Amazon Prime Air \cite{Amazon} and Google Project Wing vehicles \cite{Google} sharing the airspace simultaneously. To ensure safety of this immense UAV traffic, UAVs may periodically update their position and velocity measurements wirelessly to a Remote Control Center (RCC) for traffic management (Channel 1 in Figure \ref{fig:ex_uav_pic}). At the same time, UAVs may broadcast this information to other UAVs in its vicinity for collaborative collision avoidance (Channel 2 in Figure \ref{fig:ex_uav_pic}). Finally, autonomous UAVs may use GPS for their position measurements (Channel 3 in Figure \ref{fig:ex_uav_pic}). 
All these communication channels are subject to cyber attacks. 
If corrupted information are used in collision avoidance or path planning algorithm, they can lead to possible collisions or loss of UAVs, causing physical and financial damage and even injury to civilians.
To help protect against these attacks and consequences, participating entities such as the UAVs and the RCC can use secure estimation to estimate a target UAV's true position and velocity before using any received information for collision avoidance, for instance.
In this section, we focus on 2 types of adversarial cyber attacks on UAVs and demonstrate the effectiveness of our secure estimator through simulations.

\begin{figure}
\center
\includegraphics[width=0.35\textwidth]{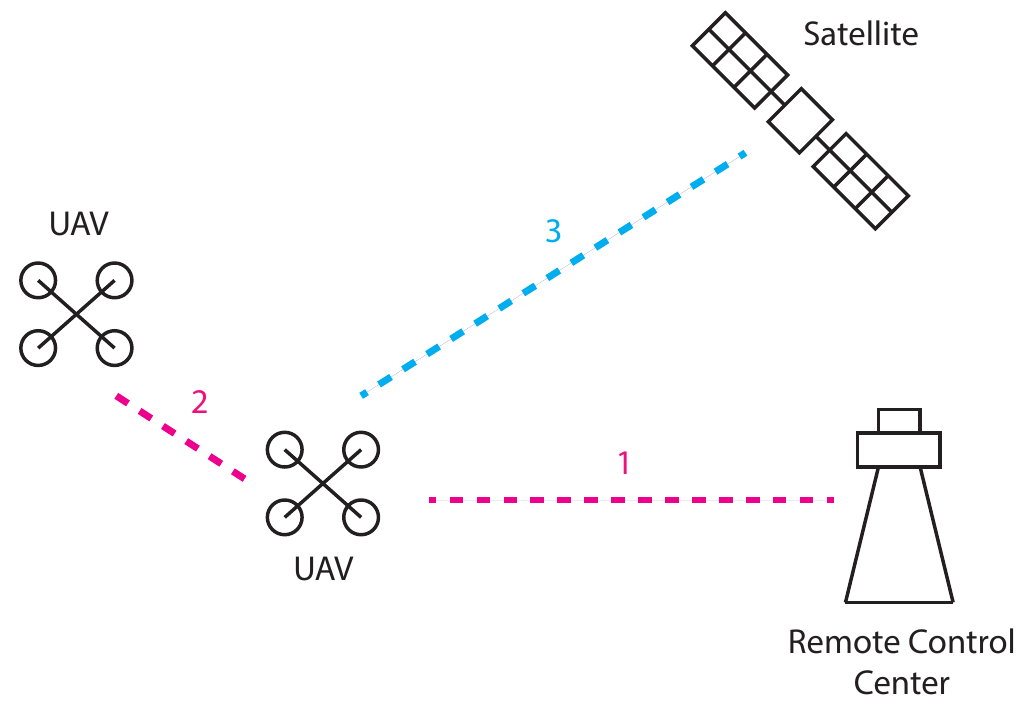}
\caption{Different communication channels that are subject to adversarial attacks.}
\label{fig:ex_uav_pic}
\end{figure}

\begin{figure}[b]
\center
\includegraphics[width=0.40\textwidth]{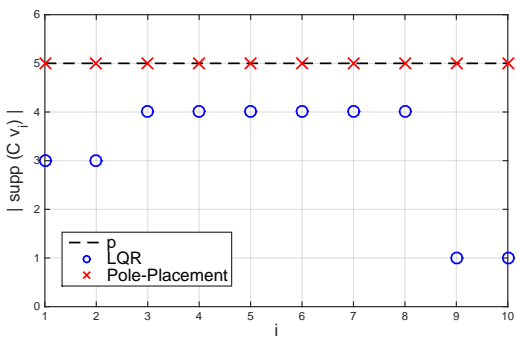}
\caption{$ \lvert \textsf{supp} (C v_i) \rvert $ for all eigenvectors $v_i$ of closed-loop matrix $A$ for 2 feedback controllers: a LQR and a controller designed by pole-placement. Black dashed line is at $p = 5$, i.e., the number of measurements.}
\label{fig:ex_pole}
\end{figure}

\subsection{UAV Model}
We consider a quadrotor with the following dynamics:
\begin{equation}
\begin{aligned}
x^{(t+1)} &= A_0 x^{(t)} + B u^{(t)}  + k + w^{(t)} \\
y^{(t)} &= C x^{(t)} + e^{(t)} + v^{(t)} \\
\end{aligned}
\end{equation}
\begin{figure*}
\begin{equation}
\begin{aligned}
A_0 &= \begin{bmatrix} 1 & T_s &  g \cdot \frac {T_s^2} {2} & 0 		& 0 & 0 & 0 & 0 	& 0 &   0 \\
			      0 & 1 & g T_s & 0 				& 0 & 0 & 0 & 0 	& 0 &   0 \\
			      0 & 0 & A_\theta^{11} & A_\theta^{12} 			& 0 & 0 & 0 & 0 	& 0 &   0 \\
			      0 & 0 & A_\theta^{21} & A_\theta^{22}  			& 0 & 0 & 0 & 0 	& 0 &   0 \\
			      0 & 0 & 0 & 0 & 1 & T_s &  g\frac{T_s^2} {2} & 0	& 0 & 0 \\
			      0 & 0 & 0 & 0 & 0 & 1 & g \cdot T_s & 0  & 0 & 0 \\
			      0 & 0 & 0 & 0 & 0 & 0 &  A_\theta^{11} & A_\theta^{12} & 0 & 0 \\
			      0 & 0 & 0 & 0 & 0 & 0 &  A_\theta^{21} & A_\theta^{22}  & 0 & 0 \\
			      0 & 0 & 0 & 0 & 0 & 0 & 0 & 0 & 1 & T_s \\
			      0 & 0 & 0 & 0 & 0 & 0 & 0 & 0 & 0 & 1
	\end{bmatrix}, 
B = \begin{bmatrix}  
				0	& 	0	& 	0 \\
				0	& 	0	& 	0 \\
				B_\theta^1 & 0 & 0 \\
				B_\theta^2 & 0 & 0 \\
				0	& 	0	& 	0 \\
				0	& 	0	& 	0 \\
				0 &  B_\theta^1  & 0 \\
				0 &  B_\theta^2  & 0 \\
				0 & 0 & \frac {k_T \cdot T_s^2} {2m} \\
				0 & 0 & \frac {k_T \cdot T_s} {m}
	\end{bmatrix}, \\
\end{aligned}
\end{equation}
\end{figure*}
\noindent 
where $x = [p_x, v_x, \theta_x, \dot \theta_x, p_y, v_y, \theta_y, \dot\theta_y, p_z, v_z]^T$ is the state vector. $p_x$, $p_y$ and $p_z$ represent the quadrotor's position along the $x$, $y$ and $z$ axis, respectively. $v_x$, $v_y$ and $v_z$ represent its velocities. $\theta_x$ and $\theta_y$ are the pitch and roll angles respectively, $\dot \theta_x$ and $\dot \theta_y$ are their corresponding angular velocities. 
$u = [\theta_{r,x}, \theta_{r,y}, F]^T$ is the input vector: $\theta_{r,i}$ is the reference pitch or roll angle, and $F$ is the commanded thrust in the vertical direction. $y = [\tilde{p}_x, \tilde{p}_y, \tilde{p}_z]^T$ 
represents compromised position measurements from the GPS under attack signal $e$. $w$ and $v$ represent process and measurement noise respectively. $k$ is a constant vector which represents gravitational effects, and can be dropped without loss of generality because we can always subtract it out in $u$. $A_\theta^{i,j}$ refers to the $ij$-th entry of the subsystem matrix of the discretized rotational dynamics $A_\theta$, and $B_\theta^i$ refers to the $i$-th entry of the input-to-state map $B_\theta$ for the discretized rotational dynamics. $T_s$ is the discrete time step, $g$ is the gravitational acceleration, $m$ is the mass of the quadrotor and $K_T$ is a thrust coefficient. Further details about this model and its derivation can be found in \cite{Bouffard}. Finally, the matrix $C$ depends on the particular measurements taken in each example.

\subsection{Decoder Design via Pole-Placement}

Assume that the UAV uses the state feedback control law $u^{(t)} = G x^{(t)}$, where $G$ is the feedback matrix which can be designed\footnote{In the GPS spoofing example, direct uncorrupted state measurements are not available. Therefore a KF is used to give estimated states which are then used for state feedback control.}. If the pair $(A_0, B)$ is controllable, then we can choose $G$ to place the closed loop poles anywhere in the complex plane. We first design a Linear Quadratic Regulator (LQR) and evaluate its secure estimation performance by checking whether the sufficient condition for $q$-error correction  (i.e., $\lvert \textsf{supp} (C v_i) \rvert > q$ for all $i$) holds. 
Figure \ref{fig:ex_pole} shows the results for a matrix $C \in \mathbb{R}^{5\times 10}$ (i.e., 5 measurements) and observe that $\lvert \textsf{supp} (C v_i) \rvert < p = 5$ for $i=1,2,9$ and $10$. Furthermore, $\lvert \textsf{supp} (C v_i) \rvert = 1 > 0$ for $i = 9$ and $10$, therefore the resulting secure decoder can correct zero errors!
To improve the secure estimation performance, we perturb the closed-loop poles slightly until $\lvert \textsf{supp} (C v_i) \rvert = p$ for all $i$, as shown in Figure \ref{fig:ex_pole}. Therefore the resulting secure decoder can achieve the maximum number of correctable errors within the limits of $p$ (i.e., the number of measurements). By keeping the perturbations on the poles small, our final controller achieves both good control and estimation performances (see Figure \ref{fig:ex_pp_est}).

\subsection{UAV under Adversarial Attack}

\subsubsection{MITM Attack in Communication with a RCC or with other UAVs} \label{sec:uav_utm}
In this section, we consider MITM attacks targeted at Channels 1 and 2 in Figure \ref{fig:ex_uav_pic}, where a malicious agent spoofs the information being sent and/or received over these channels. The goal of the RCC or other UAVs is to accurately estimate the true flight path of a target UAV from corrupted measurements. 
Note that the true path of the target UAV is unaffected by the attack.
Assume that the attacker spoofs the position measurements in order to deceive the receiver that the target UAV is deviating in the $x$-direction, i.e., he/she injects a continuous and increasing signal in the $x$-position measurement.
To make the estimation task even harder for the receiver, the attacker also injects a random Gaussian noise to an additional measurement, and the choice of this measurement can change at each time step. 

In this example, we first demonstrate the effectiveness of our proposed decoder design via pole-placement method by comparing the estimation performance of the decoder resulting from (1) a LQR controller and (2) a controller designed using pole-placement as described in the previous section.
We then implement the latter feedback controller, and compare the performance of three different state estimation schemes: (1) KF only (KF), (2) secure estimator only (SE), and (3) secure estimator combined with KF (KF+SE). 

Throughout this example, $y \in \mathbb{R}^5$, measurements include the $x$, $y$ and $z$ positions and 2 additional randomly selected states. The left plots in Figure \ref{fig:ex_pp_est} show the true attack signal on all 5 sensors (solid lines) and the estimated attack signals (dashed lines) by the secure decoder if the feedback controller is a LQ regulator (top) or one designed via pole-placement (bottom). It is obvious that the latter estimates the attack signal much more accurately. The right plots of this figure highlights this observation by explicitly showing the estimation error of the attack signal for each measurement.

The same information is shown in Figure \ref{fig:ex_pp_err}, where each row corresponds to one sensor, and the first 3 rows are the $x$, $y$ and $z$ position measurements, respectively. This figure highlights three points: first, the attacked sensors change with time; second, the number of attacked sensors at each time $t$ is less or equal to 2; third, only position measurements are corrupted.

Figure \ref{fig:ex_uav_remote} shows the estimated flight paths by all three methods.
The true path of the UAV (solid blue line) starts from the position marked by the blue triangle and ends at the position marked by the blue square. KF fails to filter out the attack signal in the $x$-position measurements as the attack is highly non-Gaussian, and the estimated trajectory (dashed red line) significantly differs from the true one. On the other hand, SE correctly estimates some portions of the trajectory and the final position of the vehicle, nevertheless it produces spontaneous errors in the $x$ direction. 
Finally the combined method KF+SE perfectly recovers the true path of the target UAV.

\begin{figure}
\center
\includegraphics[width=0.45\textwidth]{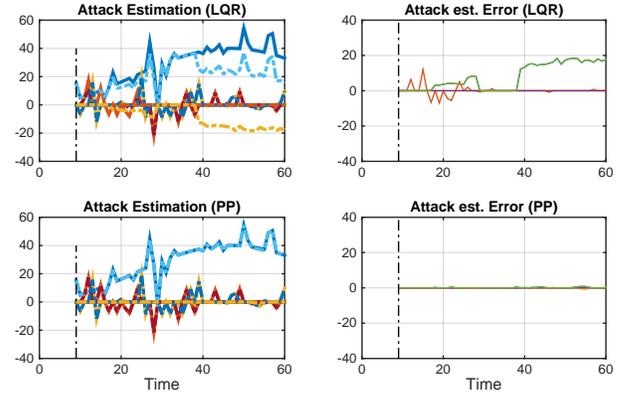}
\caption{True attack signal, estimated attack signal and estimation error in the attack signal of the estimator (SE) with 2 different feedback controllers: LQR, controller designed via pole-placement (PP); with 5 measurements. In the left plots, solid lines are true attack signals, dashed lines are estimated signals. The right plots show the estimation error in the attack signal.}
\label{fig:ex_pp_est}
\end{figure}

\begin{figure}
\center
\includegraphics[width=0.48\textwidth]{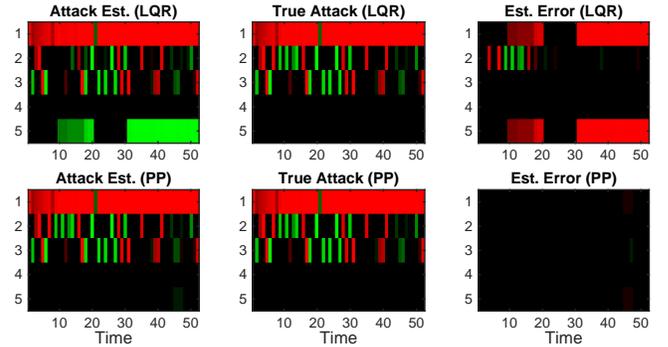}
\caption{Estimated attack signal, true attack signal and estimation error in the attack signal of the estimator (SE) with 2 different feedback controllers: LQR, controller designed via pole-placement (PP); with 5 measurements. Left column shows estimated attack signals. Middle column shows true attack signal. Right column shows estimation error. Each row corresponds to one type of measurement. Red pixels indicate positive values, green pixels are negative values and black indicates zero. }
\label{fig:ex_pp_err}
\end{figure}

\begin{figure}
\center
\includegraphics[width=0.5\textwidth]{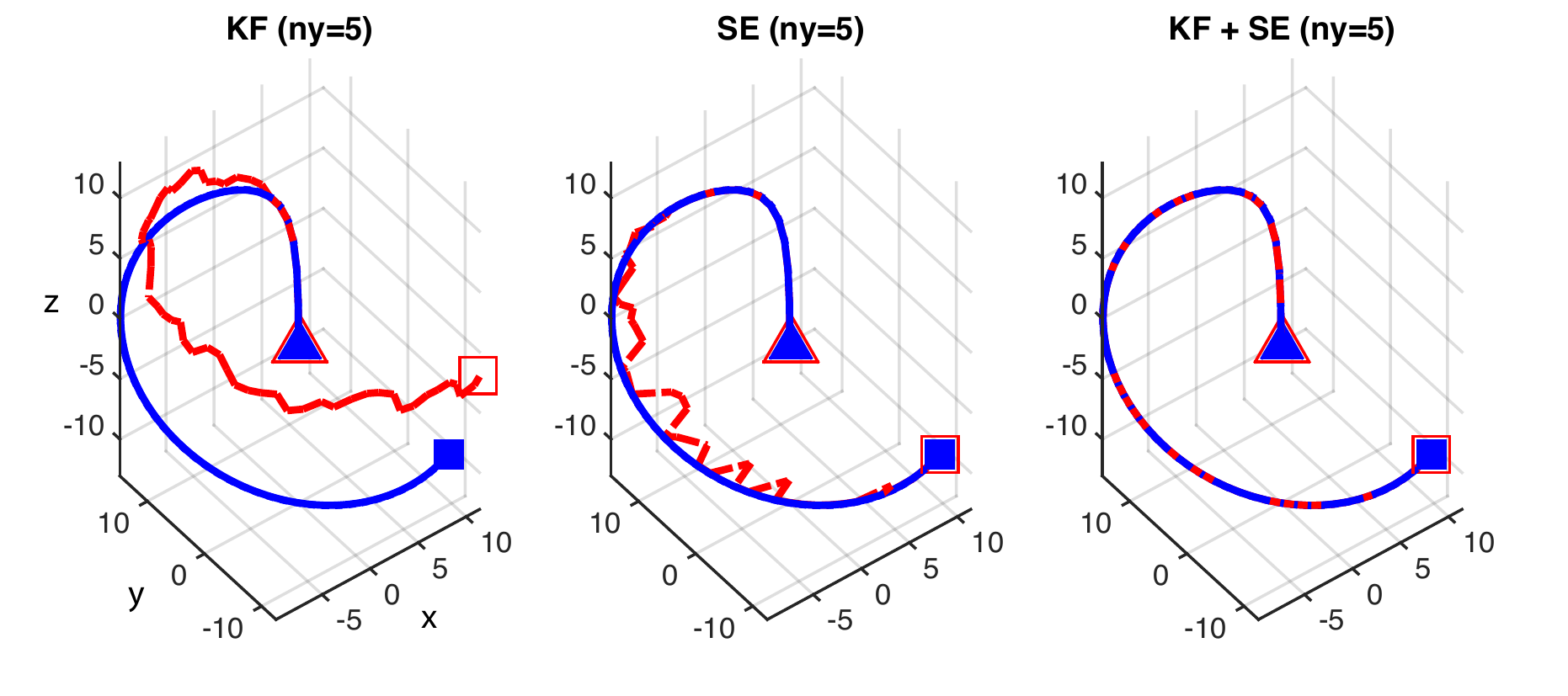}
\caption{Estimated UAV trajectory by three methods under MITM attack: KF only (KF), secure estimator only (SE), secure estimator with KF (KF+SE). Solid blue lines are the true UAV trajectories. They start from the blue triangle and end at the blue square. Red dotted lines represent estimated trajectories by each method, with 5 measurements.}
\label{fig:ex_uav_remote}
\end{figure}

\subsubsection{GPS Spoofing}

In this section, we focus on adversarial attacks in the GPS navigation system (Channel 3 in Figure \ref{fig:ex_uav_pic}). Consider the scenario where a UAV uses a Linear Quadratic Gaussian (LQG) controller to follow a desired path, $x_r^{(t)}$, designed by LQ control. In other words, a KF takes compromised and noisy measurements $y^{(t)}$ and outputs a state estimate $\hat x^{(t)}$, which is then used for state feedback control: $u^{(t)} = G (\hat x^{(t)} - x_r^{(t)})$, where $G$ is the feedback matrix. Note that in the previous example (Section \ref{sec:uav_utm}), the feedback controller had access to uncorrupted state measurements $x^{(t)}$, therefore the true path of the UAV is unaffected by attacks. On the other hand, in this example, the UAV uses estimated states $\hat x^{(t)}$ for feedback control and path following. Hence, if measurements are corrupted and the state estimates are poor, then the UAV may not be able to follow its desired path and may deviate away from it. The goal is to correctly estimate the true states of the UAV and therefore, follow the desired path. Assume an attacker spoofs the GPS position measurements in order to deviate the UAV from its planned path. He/she injects a sinusoidal signal to $x$-position measurement, as well as a Gaussian noise to a randomly chosen position measurement at each time step. 

In this example, we explore the effect of the number of sensor measurements on the secure estimation performance of two schemes: (a) KF only, (b) KF+SE.
We first assume that the UAV only uses GPS for navigation, i.e., 3 positional measurements. 
Figure \ref{fig:ex_uav_error} shows that KF completely fails to estimate the attack signal (KF, $n_y = 3$, plots in Row 1), 
consequently the actual UAV trajectory (red dashed line)  deviates significantly from its desired path (solid blue line) as shown in Figure \ref{fig:ex_uav_traj}, and deviations are largest along the $x$- and $z$-axis (Figure \ref{fig:ex_uav_est}).
On the other hand, Figures \ref{fig:ex_uav_error} (KF + SE, $n_y = 3$) shows that KF+SE's estimated attack signals are significantly more accurate with only a small estimation error in the $x$-position (plots in Row 2). 
Therefore the UAV can follow its planned path much more closely (Figures \ref{fig:ex_uav_traj}  and \ref{fig:ex_uav_est}).
Recall from Proposition \ref{prop:maximum} that the maximum number of correctable errors for a system with $p$ measurements is $\lceil p/2-1 \rceil$, which equals 1 in this case. There are at most 2 attacked nodes at any time $t$ in this example, which exceeds the above limit. This explains the estimation error in the $x$-position. Despite this small estimation error, the combined scheme KF+SE still outperforms the KF on its own.

We now show the effect of increasing the number of measurements ($n_y$, or equivalently $p$) through sensor fusion, on the estimation performance and consequently, the UAV's path following performance. Autonomous UAVs often use IMUs in addition to GPS for navigation, the former provides additional measurements such as the UAV's velocities, pitch and roll angles. Figure \ref{fig:ex_uav_error} shows that increasing the number of measurements has no affect on the KF's estimation accuracy (compare plots in Rows 1, 3 and 5). 
Even when 8 measurements are used the UAV equipped with a KF still fails to follow the desired path (Figures \ref{fig:ex_uav_traj} and \ref{fig:ex_uav_est}). On the other hand, increasing the number of measurements improves the estimation performance of the secure decoder SE and consequently the performance of the combined scheme KF+SE (compare plots in Rows 2, 4 and 6 in Figure  \ref{fig:ex_uav_error}). Observe that for both 3 and 5 measurements, the combined scheme KF+SE perfectly estimate the attack signals and therefore, can completely subtract them out from the corrupted measurements. As a result, the UAV can follow its original planned path perfectly (KF + SE $n_y=5$, KF + SE $n_y=8$ in Figures \ref{fig:ex_uav_traj} and \ref{fig:ex_uav_est}).

\begin{figure}
\center
\includegraphics[width=0.45\textwidth]{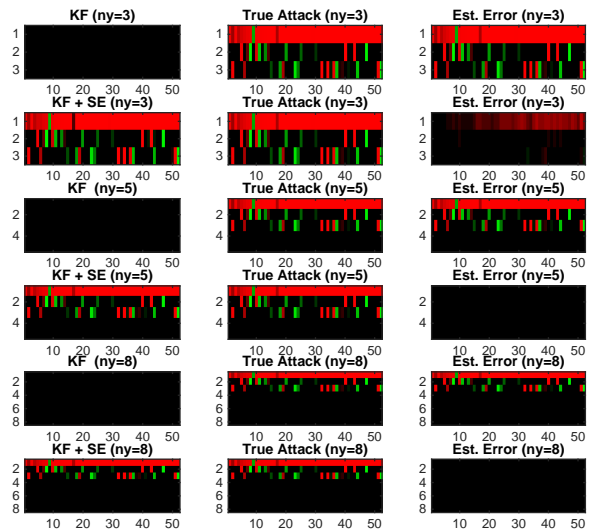}
\caption{Estimated attack signal, true attack signal and estimation error in different cases: KF and KF+SE, each using 3, 5 and 8 different measurements. Left column shows estimated attack signals. Middle column shows true attack signal. Right column shows estimation error. Each row corresponds to one sensor measurement and the first three rows in each plot are the $x$, $y$ and $z$ position measurements, respectively. Red pixels indicate positive values, green pixels are negative values and black indicates zero.}
\label{fig:ex_uav_error}
\end{figure}

\begin{figure}
\center
\includegraphics[width=0.5\textwidth]{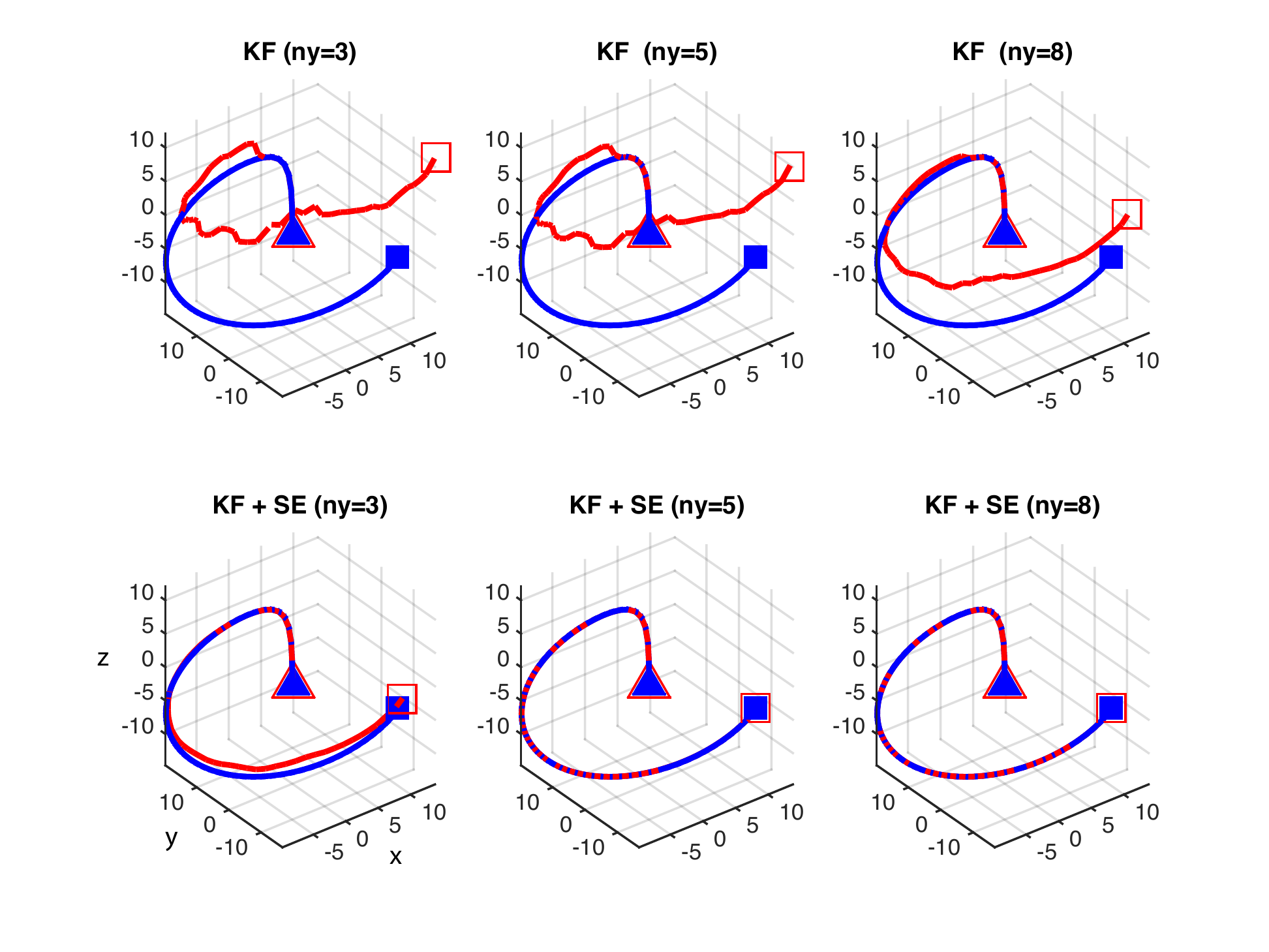}
\caption{Desired and actual UAV trajectory in different cases: KF and KF+SE, each using 3, 5 and 8 different measurements. Blue solid lines are the desired trajectory. Red dash lines are the actual UAV trajectory under adversarial attack.}
\label{fig:ex_uav_traj}
\end{figure}

\begin{figure}
\center
\includegraphics[width=0.5\textwidth]{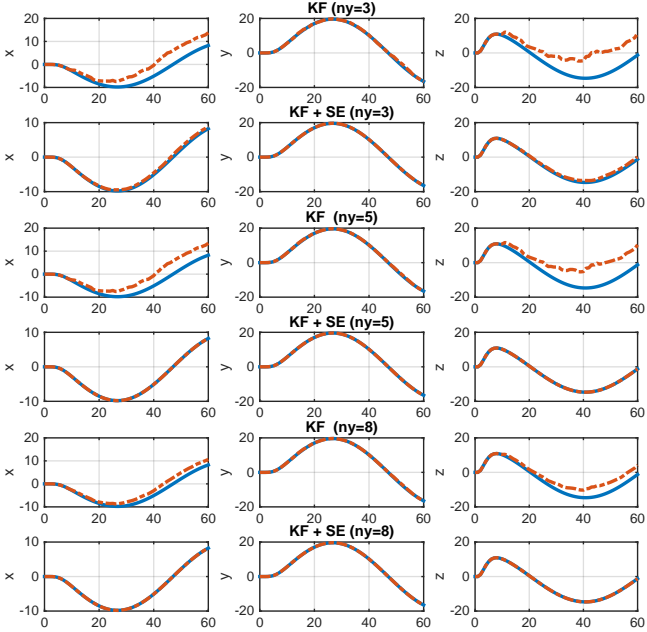}
\caption{Desired and actual UAV trajectory in different cases: KF and KF+SE, each using 3, 5 and 8 different measurements. Blue solid lines are the desired $x$, $y$ and $z$ trajectories. Red dashed lines are the actual UAV trajectories under adversarial attack.}
\label{fig:ex_uav_est}
\end{figure}

\section{Conclusion}
In this paper, we consider the problem of secure estimation for CPS under adversarial attacks. Unlike \cite{Fawzi2014} where the attacked nodes are assumed to be fixed, we allow the set of attacked nodes to change over time, and propose a computationally efficient secure decoder for the latter scenario that works for arbitrary and unbounded attacks. In addition, we propose to combine the secure decoder with a KF for improved practical performance. We demonstrate through numerical examples, that our proposed secure estimator based KF outperforms standard KF. Furthermore, we illustrate practical applications of secure estimation in UAVs under adversarial cyber attacks. This is important not only for today's aviation system but also UAV delivery systems in the near future.

\section*{APPENDIX}

\subsection*{(Proof of Theorem 1)}
In the following lemmas and proposition, we assume the following:
\begin{enumerate}
\item $A \in \mathbb{R}^{n\times n}$ has $n$ distinct positive eigenvalues such that $0 < \lambda_1 < \lambda_2 < \cdots < \lambda_n$.
\item $C \in \mathbb{R}^{p \times n}$ is full rank. 
\item The pair $(A,C)$ is observable.
\item $\forall v_i \in \mathbb{R}^n$ where  $ Av_i = \lambda_i v_i $ (i.e., $v_i$ is an eigenvector of $A$), $ \lvert \textsf{supp}(Cv_i) \rvert > 2q$.
\end{enumerate}

Recall that we want to show that given $\lvert \textsf{supp} (\Phi v_i) \rvert > 2q\cdot T $ for all eigenvectors $v_i$ of $A$, then $\lvert \textsf{supp} (\Phi z) \rvert > 2q\cdot T $ for all $z \in \mathbb{R}^n \backslash \{0\}$. Now, $A$ has $n$ distinct eigenvalues, hence the eigenvectors of $A$ form a basis for $\mathbb{R}^n$, and any $z \in \mathbb{R}^n$ can be expressed in the eigenbasis of $A$, i.e., $ z = \sum_{i=1}^n \alpha_i v_i$. Therefore, $\Phi z = \sum_{i=1}^n \alpha_i \Phi v_i$, and thus, the only way that the number of nonzero terms in $\Phi z$ may be less than $2q\cdot T$ is if too many nonzero terms in $\Phi v_i$ are cancelled during this summation. 
In other words, there are rows that are nonzero in the vectors $\Phi v_i$, however after the scaling by $\alpha_i$'s and summation, these rows become zero in $\Phi z$. Therefore our goal is to prove upper bounds on the number of such cancellations, and use these upper bounds to derive a value of $T$ such that even in the worst case (i.e., with most number of cancellations), the number of nonzero terms in $\Phi z$ is greater than $2q\cdot T$ for all $z \in \mathbb{R}^n \backslash \{0\}$.

Before we present the proof, we define the following notations: $c_i^\top$ is the $i$-th row of $C$, ${\bf v} \triangleq \begin{bmatrix} v_1 & v_2 & \cdots & v_n \end{bmatrix} \in \mathbb{R}^{n \times n}$, $\Lambda \triangleq \text{diag}\{ \lambda_1, \cdots , \lambda_n \} \in \mathbb{R}^{n \times n}$. 
For all $z \in \mathbb{R}^n$, $ z = {\bf v} \cdot \alpha$ where $\alpha = \begin{bmatrix} \alpha_1, \alpha_2, \cdots, \alpha_n \end{bmatrix}^\top $.
In addition, with these notations, $\Phi z = \begin{bmatrix} C {\bf v} \alpha; C {\bf v} \Lambda \alpha; \cdots; C {\bf v} \Lambda^{T-1} \alpha \end{bmatrix}$.


We will first consider two simple cases where $z$ is spanned by two and three eigenvectors ($m = 2$ and $m =3 $) respectively, and then generalize the results to $3 < m \leq n$. The proofs use the following result.

\begin{lem} \label{lem:gv}
For $m$ real numbers $0< \lambda_1 < \lambda_2 < \cdots < \lambda_m$, and $m$ positive integers $0< x_1 < x_2 < \cdots < x_m$, the Generalized Vandermonde matrix $GV(\lambda_1, \cdots, \lambda_m; x_1, \cdots x_m)$ defined as
\begin{equation}
GV(\lambda_1, \cdots, \lambda_m; x_1, \cdots x_m) = 
	\begin{bmatrix} \lambda_1^{x_1} & \lambda_2^{x_1} & \cdots & \lambda_m^{x_1} \\
			\lambda_1^{x_2} & \lambda_2^{x_2} & \cdots & \lambda_m^{x_2} \\
			\vdots & \vdots  & & \vdots\\
			\lambda_1^{x_m} & \lambda_2^{x_m} & \cdots & \lambda_m^{x_m} \\
	\end{bmatrix},
	\end{equation}\nonumber
is nonsingular.
\end{lem}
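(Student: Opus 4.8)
The plan is to argue by contradiction, reducing the singularity of $GV(\lambda_1,\dots,\lambda_m;x_1,\dots,x_m)$ to the impossibility of a sparse polynomial having too many positive roots. Suppose $GV$ is singular. Then its rows are linearly dependent, so there exist reals $d_1,\dots,d_m$, not all zero, with $\sum_{i=1}^m d_i \lambda_j^{x_i} = 0$ for every $j \in \{1,\dots,m\}$. I would then introduce the function $f(t) \triangleq \sum_{i=1}^m d_i t^{x_i}$, which, since the $x_i$ are positive integers, is an ordinary polynomial in $t$. By construction $\lambda_1,\dots,\lambda_m$ are $m$ \emph{distinct positive} roots of $f$.

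The crux is the auxiliary claim: a nonzero polynomial of the form $\sum_{i=1}^k d_i t^{x_i}$ with all $d_i \neq 0$ and $0 \le x_1 < x_2 < \dots < x_k$ has at most $k-1$ positive real roots. I would prove this by induction on $k$ using Rolle's theorem (equivalently, one may simply invoke Descartes' rule of signs, since there are at most $k-1$ sign changes among $k$ coefficients). For $k=1$, $d_1 t^{x_1}$ has no positive root. For the inductive step, factor $f(t) = t^{x_1} g(t)$ with $g(t) = d_1 + \sum_{i=2}^k d_i t^{x_i - x_1}$; since $t^{x_1} > 0$ on $(0,\infty)$, $g$ has exactly the same positive roots as $f$. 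If $g$ had $k$ positive roots $r_1 < \dots < r_k$, Rolle's theorem would furnish a root of $g'$ in each interval $(r_\ell, r_{\ell+1})$, hence $k-1$ positive roots of $g'(t) = \sum_{i=2}^k d_i(x_i - x_1)\, t^{x_i - x_1 - 1}$, a polynomial with $k-1$ nonzero terms, contradicting the inductive hypothesis.

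Applying the claim to $f$ (after discarding any vanishing $d_i$ and relabelling, which only lowers the number of terms) shows $f$ has at most $m-1$ positive roots, contradicting the $m$ distinct positive roots $\lambda_1,\dots,\lambda_m$. Hence $GV$ is nonsingular. I expect the main obstacle to be stating and proving the auxiliary claim cleanly — in particular the bookkeeping when some coefficients vanish and checking that the exponents arising after differentiation remain distinct nonnegative integers so the induction applies; everything else is immediate. As a remark, a less elementary route is available: the generalized Vandermonde determinant factors as the ordinary Vandermonde determinant times a Schur polynomial $s_{\mu}(\lambda_1,\dots,\lambda_m)$ whose coefficients are nonnegative integers, so for distinct positive $\lambda_i$ both factors are strictly positive; I would mention this only in passing.
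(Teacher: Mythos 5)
Your proof is correct, but it takes a genuinely different route from the paper. The paper's proof is a one-line citation: it observes that $GV(\lambda_1,\dots,\lambda_m;x_1,\dots,x_m)$ is a square submatrix of the ordinary Vandermonde matrix built on increasing positive nodes, invokes the fact that such a Vandermonde matrix is totally positive (so every minor is strictly positive), and concludes nonsingularity. That argument is short but outsources all the work to the cited total-positivity result, which is itself nontrivial and is usually established by precisely the kind of argument you give. Your approach is self-contained and elementary: a row dependence $\sum_i d_i \lambda_j^{x_i}=0$ for all $j$ would make $f(t)=\sum_i d_i t^{x_i}$ a nonzero sparse polynomial with $m$ distinct positive roots, and your auxiliary claim --- a nonzero polynomial with $k$ nonzero terms has at most $k-1$ positive roots, proved by factoring out $t^{x_1}$ and applying Rolle's theorem inductively (or Descartes' rule of signs) --- rules this out. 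The bookkeeping you flag is indeed the only delicate point, and your treatment of it is right: after differentiation the coefficients $d_i(x_i-x_1)$ remain nonzero and the exponents $x_i-x_1-1$ remain distinct nonnegative integers, and discarding vanishing $d_i$ only shrinks $k$, so the bound $m-1$ still applies. What each approach buys: the paper's is shorter and connects the lemma to a standard body of theory (and, as you note, to the Schur-polynomial factorization of the generalized Vandermonde determinant); yours requires nothing beyond Rolle's theorem and would let the paper drop the external reference entirely.
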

 
\begin{proof}
$GV(\lambda_1, \cdots, \lambda_m; x_1, \cdots x_m)$ is a submatrix of a Vandermonde matrix $V(\lambda_1, \cdots, \lambda_{T-1})$ defined as
\begin{equation}
V(\lambda_1, \cdots, \lambda_{T-1}) = 
	\begin{bmatrix}1 & 1 & \cdots & 1 \\
			\lambda_1 & \lambda_2 & \cdots & \lambda_{T-1} \\
			\lambda_1^{2} & \lambda_2^{2} & \cdots & \lambda_{T-1}^{2} \\
			\vdots & \vdots  & & \vdots\\
			\lambda_1^{T-1} & \lambda_2^{T-1} & \cdots & \lambda_{T-1}^{T-1} \\
	\end{bmatrix}.
	\end{equation}\nonumber
where $T$ is a positive integer, $0< \lambda_1 < \lambda_2 < \cdots < \lambda_{T-1}$, $\{x_1, x_2, \cdots, x_m\} \subseteq \{0, 1, \cdots, T-1\}$ and $\{ \lambda_1,\cdots, \lambda_m\} \subseteq \{ \lambda_1,\cdots, \lambda_{T-1}\} $.

$V(\lambda_1, \cdots, \lambda_{T-1})$ is a Totally Positive (TP) matrix \cite{fallat2011tnm}, and by definition of TP matrices, all minors of $V(\lambda_1, \cdots, \lambda_{T-1})$, i.e., the determinant of all submatrices of $V(\lambda_1, \cdots, \lambda_{T-1})$, are positive. Therefore $GV(\lambda_1, \cdots, \lambda_m; x_1, \cdots x_m) $ is nonsingular.
\end{proof}


\begin{lem}\label{lem:two_vec}
($m=2$)  Consider $z = \sum_{i=1}^2 \alpha_i v_i $ (i.e.,  $\alpha_1 \neq 0$, $\alpha_2 \neq 0$, $\alpha_j = 0, \forall j\ge 3$):
\begin{enumerate}
\item 
If there exists $i \in \{1, 2, \cdots, p\}$ such that $c_i^\top v_1 \neq 0$ and $c_i^\top v_2 \neq 0$ but the $i$-th row of $C {\bf v} \Lambda^k \alpha = 0$, i.e., there is a cancellation of the $i$-th row at time $k$, then the $i$-th row of $C{\bf v} \Lambda^l \alpha \neq 0$ for all $l \in \{ 0, \cdots, T-1 \}$ where $l \neq k$. In other words, there cannot be another cancellation of the $i$-th row at any other time step, or equivalently, there is a maximum of one cancellation of the $i$-th row over $T$ time steps.
\item 
If we choose $T  >  \frac { \min \{s_1, s_2\}} { \max\{s_1, s_2\} - 2q }$, then $\lvert \textsf{supp} (\Phi z) \rvert > 2q\cdot T$. 
\end{enumerate}
\end{lem}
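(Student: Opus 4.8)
The plan is to prove the two parts in order, since the counting in part~2 invokes part~1. For part~1, I would write the $i$-th component of $C\mathbf{v}\Lambda^k\alpha$ explicitly: since $z=\alpha_1 v_1+\alpha_2 v_2$ we have $\mathbf{v}\Lambda^k\alpha=\alpha_1\lambda_1^k v_1+\alpha_2\lambda_2^k v_2$, so this component equals $a_i\lambda_1^k+b_i\lambda_2^k$, where $a_i\triangleq\alpha_1(c_i^\top v_1)$ and $b_i\triangleq\alpha_2(c_i^\top v_2)$ are both nonzero under the hypotheses $\alpha_1,\alpha_2\neq0$, $c_i^\top v_1\neq0$, $c_i^\top v_2\neq0$. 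If this component vanished at two distinct steps $k\neq l$ in $\{0,\dots,T-1\}$, then $(a_i,b_i)^\top$ would lie in the kernel of $\begin{bmatrix}\lambda_1^k & \lambda_2^k\\ \lambda_1^l & \lambda_2^l\end{bmatrix}$, a $2\times2$ generalized Vandermonde matrix, which is nonsingular by Lemma~\ref{lem:gv} (equivalently, dividing the two scalar equations forces $(\lambda_1/\lambda_2)^{k-l}=1$, impossible since $0<\lambda_1<\lambda_2$); hence $a_i=b_i=0$, a contradiction. So the $i$-th component can vanish at most once over the $T$ steps.

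For part~2, I would partition the row indices $\{1,\dots,p\}$ according to the supports of $Cv_1$ and $Cv_2$. Let $I_{12}\triangleq\textsf{supp}(Cv_1)\cap\textsf{supp}(Cv_2)$ with $s_{12}\triangleq\lvert I_{12}\rvert$; let the remaining ``single-support'' rows be those in $\textsf{supp}(Cv_1)\setminus\textsf{supp}(Cv_2)$ and in $\textsf{supp}(Cv_2)\setminus\textsf{supp}(Cv_1)$; and note every row outside both supports contributes zero to every one of the $T$ blocks of $\Phi z$. For a single-support row with $c_i^\top v_1\neq0$, $c_i^\top v_2=0$, the $i$-th component of block $k$ is $\alpha_1\lambda_1^k(c_i^\top v_1)\neq0$ for every $k$ (since $\lambda_1>0$), so it contributes $T$ nonzero entries; symmetrically for the other single-support rows. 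For a row in $I_{12}$, part~1 gives at least $T-1$ nonzero entries. Summing over the three groups,
\[
\lvert\textsf{supp}(\Phi z)\rvert \ \ge\ (s_1-s_{12})T+(s_2-s_{12})T+s_{12}(T-1)\ =\ (s_1+s_2-s_{12})T-s_{12}.
\]
Then I would use the inclusion--exclusion facts $s_{12}\le\min\{s_1,s_2\}$ and $s_1+s_2-s_{12}=\lvert\textsf{supp}(Cv_1)\cup\textsf{supp}(Cv_2)\rvert\ge\max\{s_1,s_2\}$, so the right-hand side is at least $\max\{s_1,s_2\}\,T-\min\{s_1,s_2\}$. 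Since $s_1,s_2>2q$ by assumption, $\max\{s_1,s_2\}-2q>0$, and the hypothesis $T>\min\{s_1,s_2\}/(\max\{s_1,s_2\}-2q)$ rearranges to $(\max\{s_1,s_2\}-2q)T>\min\{s_1,s_2\}$, i.e.\ $\max\{s_1,s_2\}\,T-\min\{s_1,s_2\}>2qT$; combining, $\lvert\textsf{supp}(\Phi z)\rvert>2qT$.

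I do not anticipate a deep obstacle; the care points are essentially bookkeeping. I would make sure that ``at most one cancellation over $T$ steps'' is correctly translated into ``at least $T-1$ surviving nonzero entries per row of $I_{12}$'' and that the edge cases $T=1$ or $s_{12}=0$ do not break the inequality, and that Lemma~\ref{lem:gv} is invoked legitimately when one exponent is $0$ (the $2\times2$ matrix $\begin{bmatrix}1&1\\ \lambda_1^l&\lambda_2^l\end{bmatrix}$ still has determinant $\lambda_2^l-\lambda_1^l\neq0$). The inclusion--exclusion step is what pins down the precise form of the bound on $T$, so I would present it explicitly rather than as a passing remark.
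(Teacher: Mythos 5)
Your proposal is correct and follows essentially the same route as the paper: part 1 via nonsingularity of the $2\times 2$ generalized Vandermonde matrix (forcing $c_i^\top v_1 = c_i^\top v_2 = 0$, a contradiction), and part 2 via partitioning rows by the supports of $Cv_1$ and $Cv_2$, crediting $T$ nonzero entries to each single-support row and at least $T-1$ to each row of the intersection, then using $s_{12}\le\min\{s_1,s_2\}$ to reach $T\max\{s_1,s_2\}-\min\{s_1,s_2\}>2qT$. The only cosmetic differences are that you count cancellations per row rather than aggregating $\sum_k s_{r,12}^k\le\lvert L_{12}\rvert$ across time steps, and you explicitly handle the exponent-$0$ case of the Vandermonde matrix, which the paper glosses over.
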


\begin{proof}
(1) 
 (Suppose not) 
There exist $l \neq k$ and $l \in \{ 0, \cdots, T-1 \}$ such that $c_i^\top {\bf v} \Lambda^l \alpha = c_i^\top {\bf v} \Lambda^k \alpha= 0$:
\begin{equation}
\begin{aligned}
0 = ~& c_i^\top \begin{bmatrix} v_1 & v_2 \end{bmatrix} \begin{bmatrix} \alpha_1 & 0 \\ 0 & \alpha_2 \end{bmatrix} \begin{bmatrix} \lambda_1^l \\ \lambda_2^l \end{bmatrix} \\
= ~ & c_i^\top \begin{bmatrix} v_1 & v_2 \end{bmatrix} \begin{bmatrix} \alpha_1 & 0 \\ 0 & \alpha_2 \end{bmatrix} \begin{bmatrix} \lambda_1^k \\ \lambda_2^k \end{bmatrix} \nonumber
\end{aligned}
\end{equation}
Without loss of generality, assume $l<k$. We can reformulate above equation as follows:
\begin{equation}
	\begin{bmatrix} \lambda_1^{l} & \lambda_2^l \\ \lambda_1^k & \lambda_2^k \end{bmatrix} \begin{bmatrix} \alpha_1 & 0 \\ 0 & \alpha_2 \end{bmatrix} 
	\begin{bmatrix} v_1^\top \\ v_2^\top \end{bmatrix} c_i = \begin{bmatrix} 0 \\ 0 \end{bmatrix}. \nonumber 
\end{equation}
Now, the first matrix on the left hand side (LHS) is a Generalized Vandermonde matrix $GV(\lambda_1,\lambda_2;l, k)$ with $0<\lambda_1< \lambda_2$ and $0< l<k$, thus by Lemma \ref{lem:gv} it is nonsingular. The second matrix on the LHS is also nonsingular as $\alpha_1 \neq 0$ and $\alpha_2 \neq 0$. Therefore we must have $v_1^\top c_i = v_2^\top c_i = 0$ (contradiction, since $c_i^\top v_1 \neq 0$ and $c_i^\top v_2 \neq 0$ by assumption).

(2) Let $L_1, L_2, L_{12}$ be three disjoint subsets of $\{1,\cdots, p\}$ such that $L_1 = \textsf{supp} (C v_1) \cap \textsf{supp}(C v_2)^c $, $L_2 = \textsf{supp} (Cv_2) \cap \textsf{supp} (C v_1)^c $, and $L_{12} = \textsf{supp} (C v_1) \cap \textsf{supp} (C v_2)$ where the superscript $^c$ represents the set complement. Then, $\textsf{supp} (C v_1)  = L_1 \oplus L_{12}$, $\textsf{supp} (C v_2)  = L_2 \oplus L_{12}$, $ s_1 \triangleq \lvert \textsf{supp} (C v_1) \rvert = \lvert  L_1 \oplus L_{12} \rvert > 2q$,  $s_2 \triangleq \lvert \textsf{supp} ( C v_2) \rvert =  \lvert L_2 \oplus L_{12} \rvert > 2q $ and $s_{12} \triangleq \lvert  L_{12} \rvert \le \min \{s_1, s_2\}$. 
Also, possible cancellations only occur in the subset $L_{12}$ by definition. 
\begin{equation}
\begin{aligned}
	\lvert \textsf{supp} (\Phi z) \rvert &= \lvert \textsf{supp} (\alpha_1 \Phi v_1 + \alpha_2 \Phi v_2) \vert  \\&= T \cdot (s_1 - s_{12}) + T\cdot (s_2 - s_{12}) \\ &~+ \sum_{k=0}^{T-1} (s_{12}- s_{r,12}^k) \\
	&=T \cdot (s_1 + s_2 - s_{12}) - \sum_{k=0}^{T-1} s_{r,12}^k \\&
	\ge T  \cdot(s_1 + s_2 - \min \{ s_1, s_2 \}) - \sum_{k=0}^{T-1} s_{r,12}^k \nonumber
\end{aligned}
\end{equation}
where $s_{r,12}^k$ is the number of cancelled support in $L_{12}$ at time step $k$. 
More specifically, $i \in s_{r,12}^k$ if $c_i^\top v_1 \neq 0$ and $c_i^\top v_2 \neq 0$, but $c_i^\top{\bf v}\Lambda^k \alpha = 0$.
From (1), we have the followings:
\begin{equation}
\begin{aligned}
	s_{r,12}^0 &\le \lvert L_{12} \rvert \\
	s_{r,12}^1 &\le \lvert L_{12} \rvert - s_{r,12}^0 \\
	s_{r,12}^2 &\le \lvert L_{12} \rvert - s_{r,12}^0 - s_{r,12}^1 \\
	\vdots & \\
	s_{r,12}^{T-1} & \le \lvert L_{12} \rvert - \sum_{k=0}^{T-2} s_{r,12}^k \nonumber 
\end{aligned}
\end{equation}
Thus, $\sum_{k=0}^{T-1}  s_{r,12}^k \le \lvert L_{12} \vert \le \min \{s_1, s_2 \} $, and
\begin{equation}
\begin{aligned}
	\lvert \textsf{supp} (\Phi z) \vert \ge T \cdot \max\{ s_1, s_2 \} - \min \{ s_1, s_2 \}  > 2q \cdot T .\nonumber 
\end{aligned}
\end{equation}
\end{proof}

\begin{lem} \label{lem:three_vec}$(m=3)$ Consider $z = \sum_{i=1}^3 \alpha_i v_i $ where $\alpha_1 \neq 0$, $\alpha_2 \neq 0$, $\alpha_3 \neq 0$ and $\alpha_i = 0$ for $i = \{4, 5, \cdots , n\}$.   
\begin{enumerate}
\item 
$\sum_{k=0}^{T-1} s_{r,123}^k \le 2 \cdot s_{123}$ where $s_{123} = \lvert L_{123} \rvert = \lvert \textsf{supp} (Cv_1) \cap \textsf{supp} (Cv_2) \cap \textsf{supp} (Cv_3)  \rvert$.
\item  
If we choose $T > \frac { p + \min \{ s_1, s_2, s_3 \}} { \max \{s_1, s_2, s_3 \} - 2q }$,  then $\lvert \textsf{supp} (\Phi z)\rvert > 2q \cdot T$.
\end{enumerate}
\end{lem}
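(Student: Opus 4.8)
The plan is to follow the template of the $m=2$ argument (Lemma \ref{lem:two_vec}), upgrading its ``each shared row is cancelled at most once'' fact to ``each triply-shared row is cancelled at most twice'', and then doing the combinatorial bookkeeping needed to absorb the extra cancellations into the claimed threshold on $T$. Throughout, the $i$-th entry of $C{\bf v}\Lambda^k\alpha$ is $c_i^\top{\bf v}\Lambda^k\alpha = \sum_{j=1}^3 \alpha_j\,(c_i^\top v_j)\,\lambda_j^k$, since $\alpha_j = 0$ for $j\ge 4$. For part~(1) I would fix a row index $i\in L_{123}$, so that $c_i^\top v_1, c_i^\top v_2, c_i^\top v_3$ are all nonzero, and argue that $c_i^\top{\bf v}\Lambda^k\alpha$ can vanish for at most two distinct $k\in\{0,\dots,T-1\}$: if it vanished at $k_1<k_2<k_3$, then the Generalized Vandermonde matrix $GV(\lambda_1,\lambda_2,\lambda_3;k_1,k_2,k_3)$ applied to the nonzero vector $\big(\alpha_1 c_i^\top v_1,\ \alpha_2 c_i^\top v_2,\ \alpha_3 c_i^\top v_3\big)^\top$ would be zero, contradicting nonsingularity of $GV$ (Lemma \ref{lem:gv}). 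Since each of the $s_{123}=\lvert L_{123}\rvert$ rows in the triple intersection is therefore cancelled at most twice over the $T$ steps, summing gives $\sum_{k=0}^{T-1} s_{r,123}^k \le 2 s_{123}$ at once.

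For part~(2) I would assume without loss of generality that $s_3 = \max\{s_1,s_2,s_3\}$ and lower-bound $\lvert\textsf{supp}(\Phi z)\rvert$ by counting only those nonzero rows of $\Phi z$ whose index lies in $\textsf{supp}(Cv_3)$. Each such row $i$ contributes exactly $T$ minus the number of time steps at which $c_i^\top{\bf v}\Lambda^k\alpha = 0$, so
\begin{equation}
\lvert\textsf{supp}(\Phi z)\rvert \ \ge\ T s_3 \ -\ \sum_{i\in\textsf{supp}(Cv_3)} \#\{\,k\in\{0,\dots,T-1\} : c_i^\top{\bf v}\Lambda^k\alpha = 0\,\}. \nonumber
\end{equation}

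Next I would classify the rows $i\in\textsf{supp}(Cv_3)$ by how many of $c_i^\top v_1, c_i^\top v_2, c_i^\top v_3$ are nonzero: a row with only $c_i^\top v_3\neq 0$ is never cancelled (it equals $\alpha_3(c_i^\top v_3)\lambda_3^k\neq 0$); a row with exactly one further nonzero inner product is cancelled at most once, by the same $2\times 2$ Generalized Vandermonde argument used in Lemma \ref{lem:two_vec}(1) applied to the relevant pair; a row with all three nonzero is cancelled at most twice, by part~(1) above. Summing these per-row bounds, the total number of cancellations among rows of $\textsf{supp}(Cv_3)$ is at most $\lvert\textsf{supp}(Cv_1)\cap\textsf{supp}(Cv_3)\rvert + \lvert\textsf{supp}(Cv_2)\cap\textsf{supp}(Cv_3)\rvert \le s_1 + s_2 \le p + \min\{s_1,s_2,s_3\}$, where the last inequality uses $s_i\le p$ together with $s_3 = \max$ (so $\min\{s_1,s_2,s_3\}=\min\{s_1,s_2\}$ and $\max\{s_1,s_2\}\le p$).

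Combining the two estimates gives $\lvert\textsf{supp}(\Phi z)\rvert \ge T\max\{s_1,s_2,s_3\} - p - \min\{s_1,s_2,s_3\}$, and since $s_i>2q$ for every $i$ forces $\max\{s_i\}-2q\ge 1$, the hypothesis $T > \frac{p+\min\{s_1,s_2,s_3\}}{\max\{s_1,s_2,s_3\}-2q}$ makes the right-hand side strictly larger than $2q\cdot T$, as required. I expect the main obstacle to be precisely this last bookkeeping step: arranging the per-row cancellation counts ($0$, $1$ or $2$ according to the row's ``type'') so that they aggregate cleanly, and recognizing that counting within $\textsf{supp}(Cv_3)$ alone — rather than within the full union $\textsf{supp}(Cv_1)\cup\textsf{supp}(Cv_2)\cup\textsf{supp}(Cv_3)$ — is what isolates the leading $T\max\{s_i\}$ term while bounding the correction by two pairwise-intersection sizes. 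This is also the step whose structure generalizes to Proposition \ref{prop:m_vec}, where a fully shared row admits at most $m-1$ cancellations.
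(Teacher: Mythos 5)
Your proposal is correct and follows essentially the same route as the paper: part (1) is the identical $3\times 3$ Generalized Vandermonde contradiction plus a count of at most two cancellations per row of $L_{123}$, and part (2) arrives at the same bound $\lvert\textsf{supp}(\Phi z)\rvert \ge T\max\{s_1,s_2,s_3\} - p - \min\{s_1,s_2,s_3\}$. The only (harmless) difference is bookkeeping: the paper tallies all seven disjoint regions $L_1,\dots,L_{123}$ and then discards the non-negative surplus, whereas you count only rows in $\textsf{supp}(Cv_3)$ with per-row cancellation budgets of $0$, $1$, or $2$ — a slightly cleaner accounting that yields the identical estimate.
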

\begin{proof} (1) Claim: the $i$-th row of $C{\bf v}\Lambda^k \alpha$ is cancelled at most 2 times over $T$ time steps, i.e. for at most 2 distinct values of $k \in \{0, 1, \cdots, T-1\}$. 
	
	(Suppose not) There exist 3 distinct time steps $d, e, f \in \{0, 1, \cdots, T-1\}$ such that $c_i^\top {\bf v} \Lambda^d \alpha = c_i^\top {\bf v} \Lambda^e \alpha = c_i^\top {\bf v} \Lambda^f \alpha = 0$. Without loss of generality, assume $d<e<f$:
	\begin{equation}
		\begin{bmatrix} \lambda_1^d & \lambda_2^d & \lambda_3^d \\
					\lambda_1^e & \lambda_2^e & \lambda_3^e \\
					\lambda_1^f & \lambda_2^f & \lambda_3^f 
		\end{bmatrix}
		\begin{bmatrix} \alpha_1 & 0 & 0 \\ 0 & \alpha_2 & 0 \\ 0 & 0 & \alpha_3 \end{bmatrix} {\bf v}^\top c_i 
		= \begin{bmatrix} 0 \\ 0 \\ 0 \end{bmatrix} \nonumber
	\end{equation}

Again, the first matrix on the LHS is a Generalized Vandermonde matrix satisfying the conditions of Lemma \ref{lem:gv}, hence it is nonsingular. The second matrix on the LHS is also nonsingular as $\alpha_1 \neq 0$, $\alpha_2 \neq 0$ and $\alpha_3 \neq 0$. Therefore we must have ${\bf v} ^\top c_i =0 $ (contradiction). A similar derivation as in Lemma \ref{lem:two_vec} then shows $\sum_{k=0}^{T-1} s_{r,123}^k \le 2 \cdot s_{123}$. 

(2) Consider $Cv_1, Cv_2, Cv_3$ and $\Phi z$:\\

\begin{tabular}[!b]{ccc} 
  \hline
  $\textsf{supp}(Cv_1)$ &   $\textsf{supp}(Cv_2)$ &   $\textsf{supp}(Cv_3)$\\
  \hline
	$ L_1$ & {\bf 0 } & {\bf 0} \\
  {\bf 0}  &  $L_2$ & {\bf 0} \\
  {\bf 0} & {\bf 0} &  $L_3$ \\
 $ L_{12}$ &  $ L_{12}$ & {\bf 0 } \\
  {\bf 0} &  $ L_{23}$ &    $ L_{23}$ \\
 $ L_{13}$ & {\bf 0} &   $ L_{13}$  \\
$L_{123}$ &   $L_{123}$ &  $L_{123}$\\
  \hline
\end{tabular}
\bigskip\\
\noindent Without loss of generality, assume $s_3\ge s_2 \ge s_1$ (recall $s_1 = \vert L_1 \oplus L_{12} \oplus L_{13} \oplus L_{123} \vert = \lvert L_1 \rvert + s_{12} + s_{13} + s_{123}$):
\begin{equation}
\begin{aligned}
	\lvert \textsf{supp}(\Phi z) \rvert  &=  T \cdot(s_1 - s_{12} - s_{13} - s_{123}) \\&~ + T \cdot(s_2 - s_{12} - s_{23} - s_{123}) \\&~ + T \cdot(s_3 - s_{23} - s_{13} - s_{123}) \\ 
	& \quad+ \sum_{k=0}^ {T-1} \bigg (  (s_{12} - s_{r,12}^k ) + (s_{23} - s_{r,23}^k) \\ &~~~~~~~+ (s_{13} - s_{r,13}^k) + ( s_{123} - s_{r,123}^k) \bigg) \\
	&= T \cdot (s_3 + s_1 + s_2 - s_{12} - s_{13} - s_{23} - 2 \cdot s_{123}) \\&~~~ - \sum_{k=0}^{T-1} ( s_{r,12}^k + s_{r,23}^k + s_{r,13}^k  + s_{r,123}^k ) \\
	& \ge T \cdot s_3 - p - s_{123} \ge T \cdot s_3 - p - \min \{ s_1, s_2, s_3 \} \\
	& > 2q \cdot T
		\nonumber 
\end{aligned}
\end{equation}
where
$\sum_{k=0}^{T-1} (s_{r,12}^k + s_{r,23}^k + s_{r,13}^k   )$ $\le s_{12} + s_{23} + s_{13} $$ \le p - s_{123}$, $\sum_{k=0}^{T-1} s_{r,123}^k \le 2 \cdot s_{123}$ and $ s_{123} \le \min \{ s_1, s_2, s_3 \}$ and note that $T > \frac { p + \min \{ s_1, s_2, s_3 \}} { \max \{s_1, s_2, s_3 \} - 2q }$.
\end{proof}

\begin{prop} \label{prop:m_vec}
Consider $m$ eigenvector combinations $(n\ge m \ge 2)$. Then, the total number of cancellations over $T$ time steps satisfies $\sum_{k=0}^{T-1} s_{r,12 \cdots m}^k \le (m-1)\cdot s_{12 \cdots m}$ where $s_{12\cdots m} = \lvert \textsf{supp} (Cv_1) \cap \cdots \cap \textsf{supp} (Cv_m) \rvert$.
\end{prop}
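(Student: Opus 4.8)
The plan is to generalize part (1) of Lemmas \ref{lem:two_vec} and \ref{lem:three_vec}, where a row index lying in a two-fold (resp. three-fold) intersection of supports is cancelled at most $1$ (resp. $2$) times over the $T$ time steps. The natural $m$-fold statement is: every row index $i \in L_{12\cdots m}$ is cancelled at most $m-1$ times over $k \in \{0, \ldots, T-1\}$, after which the proposition follows by summing over the $s_{12\cdots m} = \lvert L_{12\cdots m}\rvert$ such rows. Recall that $i \in L_{12\cdots m}$ means $c_i^\top v_j \neq 0$ for every $j \in \{1,\ldots,m\}$, and that $i$ contributes to $s_{r,12\cdots m}^k$ precisely when $c_i^\top {\bf v} \Lambda^k \alpha = \sum_{j=1}^m \alpha_j \lambda_j^k (c_i^\top v_j) = 0$.

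First I would fix $i \in L_{12\cdots m}$ and argue by contradiction exactly as in the $m=2,3$ cases: assume there are $m$ distinct cancellation times $0 \le k_1 < k_2 < \cdots < k_m \le T-1$. Stacking the $m$ scalar identities $c_i^\top {\bf v}\Lambda^{k_r}\alpha = 0$ yields
\[
	GV(\lambda_1,\ldots,\lambda_m; k_1,\ldots,k_m)\,
	\text{diag}\{\alpha_1,\ldots,\alpha_m\}
	\begin{bmatrix} v_1^\top \\ \vdots \\ v_m^\top \end{bmatrix} c_i
	= \begin{bmatrix} 0 \\ \vdots \\ 0 \end{bmatrix}.
\]
By Lemma \ref{lem:gv} the Generalized Vandermonde matrix is nonsingular (its exponents are distinct and increasing, and the $\lambda_j$ are distinct and positive), and $\text{diag}\{\alpha_1,\ldots,\alpha_m\}$ is nonsingular since $\alpha_j \neq 0$ for all $j \le m$. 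Hence $v_j^\top c_i = 0$ for every $j \in \{1,\ldots,m\}$, contradicting $i \in L_{12\cdots m}$. This establishes the per-row bound.

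Finally I would sum: since $s_{r,12\cdots m}^k$ counts only rows of $L_{12\cdots m}$, exchanging the order of summation gives $\sum_{k=0}^{T-1} s_{r,12\cdots m}^k = \sum_{i \in L_{12\cdots m}} N_i$, where $N_i \le m-1$ is the number of time steps at which row $i$ is cancelled; hence $\sum_{k=0}^{T-1} s_{r,12\cdots m}^k \le (m-1)\lvert L_{12\cdots m}\rvert = (m-1)\,s_{12\cdots m}$. I do not expect a genuine obstacle here; the argument is a direct lift of the $m=2$ and $m=3$ proofs, and the only point needing care is bookkeeping — ensuring that $s_{r,12\cdots m}^k$ refers to cancellations of rows in the \emph{full} $m$-fold intersection (so the per-row count sums cleanly) and that Lemma \ref{lem:gv} is applied to a strictly increasing tuple of exponents, which holds after relabeling the distinct cancellation times.
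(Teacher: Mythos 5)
Your proposal is correct and follows essentially the same route as the paper: a per-row contradiction argument via the nonsingularity of the Generalized Vandermonde matrix (Lemma \ref{lem:gv}) and the nonzero $\alpha_j$'s, yielding at most $m-1$ cancellations per row of the $m$-fold intersection, then summing over those rows. Your write-up is in fact slightly more explicit than the paper's, which leaves the final summation as ``a similar derivation as in Lemma \ref{lem:two_vec}.''
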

\begin{proof}
Claim: the $i$-th row of $C{\bf v}\Lambda^k \alpha$ is cancelled at most ($m-1$) times over $T$ time steps, i.e. for at most ($m-1$) distinct values of $k \in \{0, 1, \cdots, T-1\}$
(Suppose not)
\begin{equation}
\begin{aligned}
	&	
		\begin{bmatrix} \lambda_1^d & \lambda_2^d & \cdots & \lambda_m^d \\
					\lambda_1^e & \lambda_2^e & \cdots & \lambda_m^e \\
					\vdots & \vdots & \ddots & \vdots \\
					\lambda_1^r& \lambda_2^r & \cdots & \lambda_m^r 
		\end{bmatrix} 
		\begin{bmatrix} \alpha_1 & 0 & \cdots & 0 \\ 0 & \alpha_2 & \cdots & 0 \\ 
					\vdots & \vdots & \ddots & \vdots \\
					0 & 0 & \cdots & \alpha_m \end{bmatrix} 
					{\bf v}^\top c_i 
		= \begin{bmatrix} 0 \\ 0 \\ \vdots \\ 0 \end{bmatrix} \nonumber\\
\end{aligned}
\end{equation}
Again, the first matrix on the LHS is a Generalized Vandermonde matrix satisfying the conditions of Lemma \ref{lem:gv}, hence it is nonsingular. The second matrix on the LHS is also nonsingular as $\alpha_i \neq 0$ for all $i \in \{1, 2, \cdots, m\}$. Therefore we must have ${\bf v}^\top c_i = 0$ (contradiction). A similar derivation as in Lemma \ref{lem:two_vec} then shows $\sum_{k=0}^{T-1} s_{r,12\cdots m}^ k \le (m-1) \cdot s_{12\cdots m}$.
\end{proof}

\section*{Acknowledgment}
This work was supported by the NSF CPS project ActionWebs under grant number 0931843, NSF CPS project FORCES under grant number
1239166.



\bibliographystyle{IEEEtran}
%

\bibliography{reference}

\end{document}